\def\D{{\bf D}}
\def\d{{\bf d}}
\def\I{{\bf I}}
\def\X{{\bf X}}
\def\y{{\bf y}}
\def\z{{\bf z}}
\def\u{{\bf u}}
\def\v{{\bf v}}
\def\w{{\bf w}}
\def\0{{\bf 0}}
\def\1{{\bf 1}}
\def\bet{\mbox{\boldmath$\beta$\unboldmath}}
\def\argmin{\mathop{\rm argmin}}
\def\sgn{\mathrm{sgn}}
\newtheorem{theorem}{Theorem}
\newtheorem{lemma}{Lemma}
\newtheorem{definition}{Definition}
\newtheorem{corollary}{Corollary}
\newtheorem{remark}{Remark}
\title{A Nonconvex Approach for Structured Sparse Learning}
\date{}
\author{ Shubao Zhang ~~ Hui Qian \\
College of Computer Science and Technology  \\
Zhejiang University, Hangzhou 310027, China\\
\{bravemind, qianhui\}@zju.edu.cn \\
Zhihua Zhang \\
College of Computer Science and Engineering \\
Shanghai Jiao Tong University, Shanghai 200240, China\\
zhihua@sjtu.edu.cn
}
\begin{document}

\maketitle

\begin{abstract}


Sparse learning is an important topic in many areas such as machine learning, statistical estimation, signal processing, etc. Recently, there emerges a growing interest on structured sparse learning. In this paper we focus on the $\ell_q$-analysis optimization problem for structured sparse learning ($0< q \leq 1$). Compared to  previous work, we establish weaker conditions for exact recovery in noiseless case and a tighter non-asymptotic upper bound of estimate error in noisy case. We further prove that the nonconvex $\ell_q$-analysis optimization can do recovery with a lower sample complexity and in a wider range of cosparsity than its convex counterpart. In addition, we develop an iteratively reweighted method to solve the optimization problem under the variational framework. Theoretical analysis shows that our method is capable of pursuing a local minima close to the global minima. Also, empirical results of preliminary computational experiments illustrate that our nonconvex method outperforms both its convex counterpart and other state-of-the-art methods.
\end{abstract}

\section{Introduction}

The sparse learning problem is widely studied in many areas including machine learning, statistical estimate, compressed sensing, image processing and signal processing, etc. Typically, this problem can be defined as the following linear model
\begin{equation}\label{inverse}
  \y = {\X}{\bet} + \w,
\end{equation}
where ${\bet}\in\mathbb{R}^d$ is the  vector of regression coefficients, ${\X}\in \mathbb{R}^{m\times d}$ is a design matrix with possibly far fewer rows than columns,  $\w\in\mathbb{R}^m$ is a noise vector, and $\y\in\mathbb{R}^m$ is the noisy observation. As is well known, learning with the $\ell_1$ norm (convex relaxation of the $\ell_0$ norm), such as lasso \citep{Tibshirani1996} or basis pursuit \citep{Chenabc1998}, encourages sparse estimate of $\bet$. Recently, this approach has been extended to define structured sparsity. \citet{TibshiraniTaylor2011} proposed the generalized lasso
\begin{equation}\label{glasso}
\min_{{\bet}}  \frac{1}{2}||\y - \X\bet||_2^2 + \lambda||\D \bet||_1,
\end{equation}
which assumes that the parameter $\bet$ is sparse under a linear transformation $\D\in\mathbb{R}^{n\times d}$. An equivalent constrained version is the $\ell_1$-analysis minimization proposed by \citet{Candesabc2010}, i.e.,
\begin{equation}\label{analysis_l1}
\min_{{\bet}} ||\D \bet||_1 ~~~\textrm{ s.t.}~~~ ||\y - \X\bet||_2 \leq \epsilon ,
\end{equation}
where $\D$ is called the analysis operator. In contrast to the lasso and basis pursuit in $\D=\I$,  the generalized lasso and $\ell_1$-analysis minimization make a structured sparsity assumption so that it can explore structures on the parameter. They include several well-known models as special cases, e.g., fused lasso \citep{Tibshiraniabc2005}, generalized fused lasso \citep{Viallon2014}, edge Lasso \citep{Sharpnackabc2012}, total variation (TV) minimization \citep{Rudinabc1992}, trend filtering \citep{Kimabc2009}, the LLT model \citep{Lysakerabc2003}, the inf-convolution model \citep{ChambolleLions1997}, etc. Additionally, the generalized lasso and $\ell_1$-analysis minimization have been demonstrated to be effective and even superior over the standard sparse learning in many application problems. 

The seminal work of \citet{FanLi2001} showed that the nonconvex sparse learning holds better properties than the convex one. Motivated by that, this paper investigates the following $\ell_q$-analysis minimization ($0<q\leq 1$) problem
\begin{equation}\label{analysis_lq}
\min_{{\bet}} ||\D \bet||_q^q ~~~\textrm{ s.t.}~~~ ||\y-{\X \bet}||_2 \leq \epsilon.
\end{equation}
We consider both theoretical and computational aspects. We summary the major contributions  as follows:
\begin{enumerate}
 \item [$\bullet$] We establish weaker conditions for exact recovery in noiseless case and a tighter non-asymptotic upper bound of estimate error in noisy case. Particularly, we provide a necessary and sufficient condition guaranteeing exact recovery via the $\ell_q$-analysis minimization. To  the best of our knowledge,  our work is  the first study in this issue. 
 \item [$\bullet$] We show the advantage of the nonconvex $\ell_q$-analysis minimization ($q<1$) over its convex counterpart. Specifically, the nonconvex $\ell_q$-analysis minimization can do recovery with a lower sample complexity (on the order of $q k\log (n/k)$) and in a wider range of cosparsity.
 \item [$\bullet$]  We resort to an iteratively reweighted method to solve the $\ell_q$-analysis minimization problem. Furthermore, we prove that our method is capable to obtain a local minima close to  the global minima.
\end{enumerate}
The numerical results are consistent with the theoretical analysis. For example, the nonconvex $\ell_q$-analysis minimization indeed can do recovery with smaller sample size and in a wider range of cosparsity than the convex method. The numerical results also show that our iteratively reweighted method outperforms the other state-of-the-art methods such as NESTA \citep{Beckerabc2011}, split Bregman method \citep{Caiabc2009}, and iteratively reweighted $\ell_1$ method \citep{Candesabc2007} for the $\ell_1$-analysis minimization problem and the greedy analysis pursuit (GAP) method \citep{Namabc2011} for the $\ell_0$-analysis minimization problem ($q\rightarrow 0$ in (\ref{analysis_lq})) .

\subsection{Related Work}

\citet{Candesabc2010} studied the $\ell_1$-analysis minimization problem in the setting that the observation is contaminated with stochastic noise and the analysis vector $\D\bet$ is approximately sparse. They provided a $\ell_2$ norm estimate error bounded by $C_0\epsilon + C_1 k^{-1/2}||\D \bet-(\D \bet)(k)||_1$ under the assumption that $\X$ obeys the D-RIP condition $\delta_{2k}<0.08$ or $\delta_{7k}<0.6$ and $\D$ is a Parseval tight frame \footnote{A set of vectors $\{\d_k\}$ is a frame of $\mathbb{R}^d$ if there exist constants $0<A\leq B <\infty$ such that $ \forall \bet\in\mathbb{R}^d, ~~ A||\bet||_2^2 \leq ||D\bet||_2^2 \leq B||\bet||_2^2, $ where $\{\d_k\}$ are the columns of $\D^{T}$. When $A=B=1$, the columns of $\D^{T}$ form a Parseval tight frame and $\D^{T}\D=\boldsymbol{I}$.}. \citet{Namabc2011}  studied the $\ell_1$-analysis minimization problem in the setting that there is no noise and the analysis vector $\D\bet$ is sparse. They showed that a null space property with sign pattern is necessary and sufficient to guarantee exact recovery. \citet{Liuabc2012} improved the analysis in \citep{Candesabc2010}. They established an estimate error bound similar to the one in \citep{Candesabc2010} for the general frame case. And for the Parseval frame case, they provided a weaker D-RIP condition $\delta_{2k}<0.2$.

\citet{TibshiraniTaylor2011} proposed the generalized lasso and developed a LARS-like algorithm pursuing its solution path. \citet{Vaiterabc2013} conducted a robustness analysis of the generalized lasso against noise. \citet{Liuabc2013} derived an estimate error bound for the generalized lasso under the assumption that the condition number of $\D$ is bounded. Specifically, a $\ell_2$ norm estimate error bounded by $C\lambda + ||({\X}^T{\X})^{-1}{\X}^T\w||_2$ is provided. \citet{NeedellWard2013}  investigated the total variation minimization. They proved that for an image $\bet\in\mathbb{R}^{N\times N}$, the TV minimization can stably recover it with estimate error less than $C \log(\frac{N^2}{k}) (\epsilon + ||\D\bet-(\D\bet)(k)||_1 /\sqrt{k} )$ when the sampling matrix satisfies the RIP of order $k$.

So far, all the related works discussed above consider convex optimization problem. \citet{Aldroubiabc2012} first studied the nonconvex $\ell_q$-analysis minimization problem (\ref{analysis_lq}). They established estimate error bound using the null space property and restricted isometry property respectively. For the Parseval frame case, they showed that the D-RIP condition $\delta_{7k}<\frac{6-3(2/3)^{2/q-2}}{6-(2/3)^{2/q-2}}$ is sufficient to guarantee stable recovery. \citet{LiLin2014} showed that the D-RIP condition $\delta_{2k}<0.5$ is sufficient to guarantee the success of $\ell_q$-analysis minimization. In this paper, we significantly improve the analysis of $\ell_q$-analysis minimization. For example, we provide a weaker D-RIP condition $\delta_{2k}<\frac{\sqrt{2}}{2}$. Additionally, we show the advantage of the nonconvex $\ell_q$-analysis minimization over its convex counterpart.

\section{Preliminaries}

Throughout this paper, $\mathbb{N}$ denotes the natural number. $\lfloor \cdot\rfloor$ denotes the rounding down operator. The $i$-th entry of a vector ${\bet}$ is denoted by $\beta_i$. The \emph{best $k$-term approximation} of a vector ${\bet}\in\mathbb{R}^d$ is obtained by setting its $d-k$ insignificant components to zero and denoted by ${\bet}(k)$. The $\ell_q$ norm of a vector ${\bet}\in\mathbb{R}^d$ is defined as $||{\bet}||_q = (\sum_{i=1}^d |\beta_i|^q)^{1/q}$ \footnote{$||\bet||_q$ for $0<q<1$ is not a norm, but $d(\u,\v)=||\u-\v||_q^q$ for $\u,\v\in \mathbb{R}^d$ is a metric.} for $0<q<\infty$. When $q$ tends to zero, $||{\bet}||_q^q$ is the $\ell_0$ norm $||\bet||_0$ used to measure the \emph{sparsity} of $\bet$.  $\sigma_{k}(\bet)_q = \inf_{\z\in\{\z\in\mathbb{R}^d: ||\z||_0\leq k\}} ||\bet-\z||_q$ denotes the best $k$-term approximation error of $\bet$ with the $\ell_q$ norm. The $i$-th row of a matrix ${\D}$ is denoted by $\D_{i.}$. $\sigma_{max}(\D)$ and $\sigma_{min}(\D)$ denote  the maximal and minimal nonzero singular value of $\D$, respectively. Let $\kappa=\frac{\sigma_{max}(\D)}{\sigma_{min}(\D)}$, and  $\textrm{Null} (\X)$ denote the null space of $\X$.

Now we introduce some concepts related to the $\ell_q$-analysis minimization problem (\ref{analysis_lq}). The number of zeros in the analysis vector $\D \bet$ is refered to as \emph{cosparsity} \citep{Namabc2011}, and defined as $l:= n-||\D \bet||_0$. Such a vector ${\bet}$ is said to be $l$-cosparse. The $support$ of a vector ${\bet}$ is the collection of indices of nonzeros in the vector, denoted by $T:=\{i : \beta_i\neq0\}$. $T^c$ denotes the complement of $T$. The indices of zeros in the analysis vector $\D \bet$ is defined as the $cosupport$ of ${\bet}$, and denoted  by $\Lambda:=\{j : \langle \D_{j.},{\bet} \rangle=0\}$. The submatrix $\D_{T}$ is constructed by replacing the rows of $\D$ corresponding to $T^c$ by zero rows. Denote $\D_T\bet=(\D\bet)_T$. Based on these concepts, we can see that a $l$-cosparse vector $\bet$ lies in the subspace $\mathcal{W}_{\Lambda} :=\{{\bet}:\D_{\Lambda}{\bet}=\boldsymbol{0}, |\Lambda|= l\} = \textrm{Null}(\D_{\Lambda})$. Here $|\Lambda|$ is the cardinality of $\Lambda$.

In our analysis below, we use the notion of $\mathcal{A}$-RIP \citep{BlumensathDavies2008}.
\begin{definition}
[{$\mathcal{A}$-restricted isometry property}] A matrix $\boldsymbol{\Phi}\in\mathbb{R}^{m\times d}$ obeys the $\mathcal{A}$-restricted isometry property with constant $\delta_{\mathcal{A}}$ over any subset $\mathcal{A}\in\mathbb{R}^d$, if $\delta_{\mathcal{A}}$ is the smallest quantity satisfying
\[ (1-\delta_{\mathcal{A}})||{\v}||_2^2 \leq ||\boldsymbol{\Phi}{\v}||_2^2 \leq (1+\delta_{\mathcal{A}})||{\v}||_2^2 \]
for all ${\v}\in\mathcal{A}$.
\end{definition}
Note that RIP \citep{CandesTao2004}, D-RIP \citep{Candesabc2010} and $\boldsymbol{\Omega}$-RIP \citep{Giryesabc2013} are special instances of the $\mathcal{A}$-RIP with different choices of the set $\mathcal{A}$. For example, when choosing $\mathcal{A}=\{{\D\v}: \v\in\mathbb{R}^d, ||\v||_0\leq k\}$ and $\mathcal{A}=\{{\v}: \v\in\mathbb{R}^d, \D_{\Lambda}\v = \boldsymbol{0}, |\Lambda|\geq l\}$, the corresponding $\mathcal{A}$-restricted isometries are  D-RIP and $\Omega$-RIP, respectively. It has been verified that any random matrix $\boldsymbol{\Phi}$ holds the $\mathcal{A}$-restricted isometry property with overwhelming probability provided that the number of samples depends logarithmically on the number of subspaces in $\mathcal{A}$ \citep{BlumensathDavies2008}.

\section{Main Results}

In this section, we present our main theoretical results pertaining to the ability of $\ell_q$-analysis minimization to estimate (approximately) cosparse vectors with and without noise.

\subsection{Exact Recovery in Noiseless Case}

A well-known necessary and sufficient condition guaranteeing the success of basis pursuit is the null space property \citep{Cohenabc2009}. Naturally, we define a null space property adapted to $\D$ (D-NSPq) of order $k$ \citep{Aldroubiabc2012} for the $\ell_q$-analysis minimization. That is,
\begin{equation} \label{D-NSPq}
  \forall \boldsymbol{v}\in \textrm{Null} (\X)/\{\boldsymbol{0}\}, \forall |T|\leq k,  ||\boldsymbol{D}_T\boldsymbol{v}||_q^q < ||\boldsymbol{D}_{T^c}\boldsymbol{v}||_q^q .
\end{equation}

\begin{theorem} \label{exact_nsp}
Let $\bet\in\mathbb{R}^d$ with cosupport $\Lambda$, $||\D\bet||_0=k$, and $\y=\X\bet$. Then $\bet$ is the unique minimizer of the $\ell_q$-analysis minimization (\ref{analysis_lq}) with $\epsilon=0$ if and only if $\X$ satisfies the D-NSPq (\ref{D-NSPq}) relative to $\Lambda^c$.
\end{theorem}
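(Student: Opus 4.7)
The plan is to prove the two implications separately, using in both cases the objective's support decomposition $\|\D\w\|_q^q = \|\D_\Lambda\w\|_q^q + \|\D_{\Lambda^c}\w\|_q^q$ together with the $q$-subadditivity $\|a+b\|_q^q \le \|a\|_q^q + \|b\|_q^q$, valid for $0<q\le 1$; applied to $\|a\|_q^q = \|(a-b)+b\|_q^q$ this yields the reverse-triangle form $\|a-b\|_q^q \ge \|a\|_q^q - \|b\|_q^q$, which is the key inequality.

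For the direction D-NSPq at $\Lambda^c$ $\Rightarrow$ uniqueness, I would pick any feasible $\hat\bet \ne \bet$ and set $\v := \bet - \hat\bet \in \textrm{Null}(\X) \setminus \{\0\}$. Because $(\D\bet)_\Lambda = \0$, the support split gives
\[
\|\D\hat\bet\|_q^q = \|(\D\bet)_{\Lambda^c}-\D_{\Lambda^c}\v\|_q^q + \|\D_\Lambda\v\|_q^q \ge \|\D\bet\|_q^q + (\|\D_\Lambda\v\|_q^q - \|\D_{\Lambda^c}\v\|_q^q),
\]
where the reverse-triangle inequality was applied to the first summand. The bracketed quantity is strictly positive by the D-NSPq hypothesis at $T=\Lambda^c$, so $\|\D\hat\bet\|_q^q > \|\D\bet\|_q^q$ and $\bet$ is uniquely optimal.

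The converse direction I would attack by contrapositive: assuming some $\v \in \textrm{Null}(\X)\setminus\{\0\}$ witnesses $\|\D_{\Lambda^c}\v\|_q^q \ge \|\D_\Lambda\v\|_q^q$, the goal is to exhibit a feasible $\hat\bet \ne \bet$ with $\|\D\hat\bet\|_q^q \le \|\D\bet\|_q^q$. The natural first candidate is $\hat\bet := \bet - t\v$, which is feasible by $\v \in \textrm{Null}(\X)$ and yields
\[
\|\D\hat\bet\|_q^q = \|(\D\bet)_{\Lambda^c}-t\D_{\Lambda^c}\v\|_q^q + |t|^q\|\D_\Lambda\v\|_q^q,
\]
so the problem reduces to choosing $t$ so that the first (non-smooth) term drops by at least the $|t|^q\|\D_\Lambda\v\|_q^q$ penalty.

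The main technical obstacle lies in this converse step, because a direct upper bound of the first term via $q$-subadditivity is too crude. I expect the cleanest route is to refine the witness $\v$ so that its $\D$-image has support exactly inside $\Lambda^c$: for instance, set $\bet' := \D^+(\D\v)_{\Lambda^c}$, which under $\D\D^+ = \I$ gives $\D\bet' = (\D\v)_{\Lambda^c}$, and then contrast $\bet'$ with $\bet'-\v$, whose objectives evaluate exactly to $\|\D_{\Lambda^c}\v\|_q^q$ and $\|\D_\Lambda\v\|_q^q$ respectively. The NSP-failure inequality then directly breaks uniqueness for this surrogate problem, and an appeal to Lemma~\ref{io_gnsp} with $\Sigma$ the cosparse subspace $\{\bet:\D_\Lambda\bet=\0\}$ (via the instance-optimality $\Leftrightarrow$ generalized-null-space-property equivalence) transfers the non-uniqueness back to the given $\bet$.
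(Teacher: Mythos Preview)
Your sufficiency direction is correct and is essentially the paper's argument written from the other side of the inequality: the paper starts from $\|\D\bet\|_q^q$ and shows it is strictly below $\|\D\z\|_q^q$ using the same support split and $q$-triangle inequality, so there is no daylight between the two.

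The necessity direction, however, has a genuine gap. Your construction $\bet' := \D^{+}(\D\v)_{\Lambda^c}$ only yields $\D\bet' = (\D\v)_{\Lambda^c}$ when $\D\D^{+}=\I$, but in the analysis setting the standing hypothesis is merely that $\D\in\mathbb{R}^{n\times d}$ has full \emph{column} rank, so $\D^{+}\D=\I$ while $\D\D^{+}$ is only the orthogonal projector onto $\mathrm{range}(\D)$. For $n>d$ the vector $(\D\v)_{\Lambda^c}$ typically falls outside $\mathrm{range}(\D)$, and then no $\bet'$ with $\D\bet'=(\D\v)_{\Lambda^c}$ exists at all. This is exactly why the synthesis-side contrapositive (take $\bet=v_T$) does not port over cleanly to the analysis model. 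Your closing appeal to Lemma~\ref{io_gnsp} is the right instinct but is invoked in the wrong direction: the lemma does not ``transfer non-uniqueness back to the given $\bet$''; it is an equivalence between instance optimality and a generalized null-space property.

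The paper avoids the whole contrapositive detour by using Lemma~\ref{io_gnsp} forward. It assumes that \emph{every} $\bet$ with cosupport $\Lambda$ is the unique $\ell_q$-analysis minimizer (this is the intended reading of the statement), so the decoder $\Delta(\y)=\D\Delta_0(\y)$ with $\Delta_0(\y)=\argmin_{\X\z=\y}\|\D\z\|_q^q$ recovers $\D\bet$ exactly on that model class; hence instance optimality~(\ref{io}) holds trivially with left-hand side zero. Lemma~\ref{io_gnsp} then yields the generalized NSP~(\ref{gnsp}), and specializing the metrics to $\|\cdot\|_q^q$ gives $\|\D\v\|_q^q\le C_2\|\D_\Lambda\v\|_q^q$ for all $\v\in\mathrm{Null}(\X)$, i.e.\ $\|\D_{\Lambda^c}\v\|_q^q\le(C_2-1)\|\D_\Lambda\v\|_q^q<\|\D_\Lambda\v\|_q^q$. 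No surrogate $\bet'$ is needed, and no assumption on $\D\D^{+}$ enters.
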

Letting the set $\Lambda$ ($|\Lambda|=l$) vary, the following result is a corollary of Theorem \ref{exact_nsp}.
\begin{corollary}
Given a matrix $\X\in\mathbb{R}^{m\times d}$ and $\y=\X\bet$, the $\ell_q$-analysis minimization (\ref{analysis_lq}) with $\epsilon=0$ recovers every $l$-cosparse vector $\bet\in\mathbb{R}^d$ as a unique minimizer if and only if $\X$ satisfies the D-NSPq (\ref{D-NSPq}) of order $n-l$.
\end{corollary}

This corollary establishes a necessary and sufficient condition for exact recovery of all $l$-cosparse vectors via the $\ell_q$-analysis minimization. It also implies that for every $\y=\X\bet$ with $l$-cosparse $\bet$, the $\ell_q$-analysis minimization actually solves the $\ell_0$-analysis minimization when the D-NSPq of order $n-l$ holds. Based on the D-NSPq, the following corollary shows that the nonconvex $\ell_q$-analysis minimization is not worse than its convex counterpart.

\begin{corollary}
 For $0<q_1 <q_2\leq 1$, the sufficient condition for exact recovery via the $\ell_{q_2}$-analysis minimization is also sufficient for exact recovery via the $\ell_{q_1}$-analysis minimization.
\end{corollary}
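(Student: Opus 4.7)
The plan is to reduce the corollary to an implication between the two D-NSP conditions. By Theorem~\ref{exact_nsp}, exact recovery of every $\bet^{*}$ with cosupport $\Lambda$ and $\|\D\bet^{*}\|_0=k$ via $\ell_{q_i}$-analysis minimization is equivalent to the D-NSP$_{q_i}$ property (\ref{D-NSPq}) holding relative to $\Lambda^c$. Because that characterization is both necessary and sufficient, it is enough to prove that D-NSP$_{q_2}$ implies D-NSP$_{q_1}$ whenever $0<q_1<q_2\leq 1$: any hypothesis strong enough to force $\ell_{q_2}$-recovery will force D-NSP$_{q_2}$, which in turn will force D-NSP$_{q_1}$ and hence $\ell_{q_1}$-recovery.

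Fix $\v\in\textrm{Null}(\X)/\{\0\}$ and write $\w=\D\v$. The first step is to observe that the D-NSP$_q$ inequality $\|\w_T\|_q^q<\|\w_{T^c}\|_q^q$ need only be checked at the single set $T^{*}$ consisting of the indices of the $k$ largest entries of $|\w|$: this $T^{*}$ simultaneously maximizes $\|\w_T\|_q^q$ and minimizes $\|\w_{T^c}\|_q^q$ over $|T|\leq k$, and -- crucially -- $T^{*}$ depends only on $|\w|$, not on $q$. Thus the implication reduces to showing that
\[
 \|\w_{T^{*}}\|_{q_2}^{q_2}<\|\w_{(T^{*})^c}\|_{q_2}^{q_2} \quad\Longrightarrow\quad \|\w_{T^{*}}\|_{q_1}^{q_1}<\|\w_{(T^{*})^c}\|_{q_1}^{q_1}.
\]

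To close the gap I would use a homogeneity-plus-monotonicity trick. Scaling $\w\mapsto c\w$ multiplies both sides of each inequality by $c^q$ and so preserves them; thus I may normalize so that the $k$-th largest magnitude of $|\w|$ equals $1$. Then $|w_i|\geq 1$ for $i\in T^{*}$ and $|w_i|\leq 1$ for $i\in(T^{*})^c$. The elementary monotonicity $t^{q_1}\leq t^{q_2}$ for $t\geq 1$ and $t^{q_1}\geq t^{q_2}$ for $t\in[0,1]$ (with $q_1<q_2$) then yields termwise
\[
 \|\w_{T^{*}}\|_{q_1}^{q_1}\;\leq\;\|\w_{T^{*}}\|_{q_2}^{q_2}\;<\;\|\w_{(T^{*})^c}\|_{q_2}^{q_2}\;\leq\;\|\w_{(T^{*})^c}\|_{q_1}^{q_1},
\]
where the middle strict inequality is the D-NSP$_{q_2}$ hypothesis. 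This establishes D-NSP$_{q_1}$ and finishes the proof.

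The argument is very short and I do not expect any substantive obstacle; the only point that deserves care is verifying that the worst-case support $T^{*}$ is the same for every exponent $q\in(0,1]$, which is exactly what licenses the one-shot normalization and the termwise $t^{q}$ comparison above.
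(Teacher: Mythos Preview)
Your proof is correct and takes essentially the same approach as the paper: both invoke Theorem~\ref{exact_nsp} to reduce the claim to the implication D-NSP$_{q_2}\Rightarrow$ D-NSP$_{q_1}$, restrict attention to the set $T^{*}$ of the $k$ largest-magnitude entries of $\D\v$, and then exploit the monotonicity of $t\mapsto t^{q}$ in $q$ according to whether $t\le 1$ or $t\ge 1$. The only difference is cosmetic: the paper rewrites the NSP inequality in the ratio form $\sum_{j\in T^{*}}\bigl(\sum_{l\in (T^{*})^c}(|a_l|/|a_j|)^{q}\bigr)^{-1}<1$ and observes that the left side is nondecreasing in $q$, whereas you normalize so the $k$-th largest magnitude equals $1$ and compare the two sides termwise.
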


It is hard to check the D-NSPq (\ref{D-NSPq}). The following theorem provides a sufficient condition for exact recovery using the $\mathcal{A}$-RIP.
\begin{theorem} \label{exact_rip}
Let $\bet\in\mathbb{R}^d$, $||\D\bet||_0=k$, and $\y=\X\bet$.  Assume that $\D\in\mathbb{R}^{n\times d}$ has full column rank, and its condition number is upper bounded by $\kappa< \sqrt{\frac{2\rho+1+\sqrt{4\rho+1}}{2\rho}}$. If $\X\in\mathbb{R}^{m\times d}$ satisfies the $\mathcal{A}$-RIP over the set $\mathcal{A}=\{{\D\v}: ||\v||_0 \leq (t^q+1)k\}$ with $k,t^q k\in\mathbb{N}, t>0,q\in(0,1]$, i.e.,
\begin{equation}\label{noiseless_cond}
\delta_{(t^q+1)k} < \frac{\rho(1-\kappa^4) + \kappa^2 \sqrt{4\rho+1}}  { \rho (\kappa^2+1)^2 + \kappa^2 }
\end{equation}
with $\rho= t^{q-2}/4$, then $\bet$ is the unique minimizer of the $\ell_q$-analysis minimization
 (\ref{analysis_lq}) with $\epsilon=0$.
\end{theorem}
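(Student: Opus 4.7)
My plan is to invoke Theorem~\ref{exact_nsp} and reduce the claim to verifying that $\X$ satisfies the D-NSP$_q$ of order $k$. Fix any $\v \in \text{Null}(\X)\setminus\{\0\}$ and any index set $T$ with $|T|\leq k$. Without loss of generality, I may take $T = T_0$ to be the support of the $k$ largest absolute entries of $\D\v$, so that the target inequality becomes $\|\D_{T_0}\v\|_q^q < \|\D_{T_0^c}\v\|_q^q$. This reduces the theorem to a purely geometric statement about $\D\v$.

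Next, I partition $T_0^c$ into consecutive blocks $T_1,T_2,\ldots$ of size $s := t^q k$, sorted so that the absolute entries of $\D\v$ in $T_j$ dominate those in $T_{j+1}$. To each tail block I apply Lemma~\ref{sparsepolytope} (sparse representation of a polytope) with the plateau height of $|\D\v|$ on $T_{j-1}$ as the value of $\alpha$ and with sparsity parameter $s$; this lets me express $\D_{T_j}\v$ (more precisely, a suitably rescaled version) as a convex combination $\sum_i \lambda_i^{(j)} \u_i^{(j)}$ of $s$-sparse vectors whose $\ell_\infty$-norm is controlled by $\alpha$. The payoff is a bound that translates block $\ell_2$ masses into $\ell_q$ tail norms, namely $\|\D_{T_j}\v\|_2 \lesssim s^{1/2-1/q}\|\D_{T_{j-1}^c}\v\|_q$ in the usual chaining form.

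The central step is to upgrade these $\ell_q$ tail controls into an $\ell_2$ inequality on $\D_{T_0}\v$ using the $\mathcal{A}$-RIP. Because $\v\in\text{Null}(\X)$ and $\D^{+}\D=\I$ (here I use that $\D$ has full column rank), I have $\0 = \X\v = \X\D^{+}(\D\v)$, so the RIP on sums of the form $\D_{T_0}\v + \D\u_i^{(j)}$ (which belong to the set $\mathcal{A}$ of $(t^q+1)k$-sparse images of $\D$) yields a parallelogram-type inequality of the shape
\[
\|\D_{T_0}\v\|_2^2 \;\leq\; A(\delta,\kappa)\,\|\D_{T_0}\v\|_2 \sum_{j\geq 1}\|\D_{T_j}\v\|_2 \;+\; B(\delta,\kappa)\Bigl(\sum_{j\geq 1}\|\D_{T_j}\v\|_2\Bigr)^{2},
\]
where the condition number $\kappa$ enters precisely when converting between $\|\w\|_2$ and $\|\D\w\|_2$ through $\D^{+}$ and $\D$. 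Combining this with the block bounds from the previous paragraph, passing from $\ell_2$ to $\ell_q$ on $\D_{T_0}\v$ via $\|\D_{T_0}\v\|_2 \geq k^{1/2-1/q}\|\D_{T_0}\v\|_q$, and setting $\rho = t^{q-2}/4$, I obtain an inequality of the form $\|\D_{T_0}\v\|_q^q \leq \Psi(\delta_{(t^q+1)k},\kappa,\rho)\,\|\D_{T_0^c}\v\|_q^q$. The stated threshold (\ref{noiseless_cond}) is exactly the regime where $\Psi<1$, which completes the D-NSP$_q$.

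The main obstacle I expect is the quartic $\kappa^{4}$ term in the threshold: RIP here controls $\|\X\w\|_2$ against $\|\w\|_2$ while the NSP we need is phrased in $\D\v$. Each transit through $\D^{+}$ (once in $\v=\D^{+}\D\v$, once in translating between norms of $\D_{T_0}\v$ and the ambient vector) costs a factor of $\kappa$, and combining these within the quadratic master inequality produces precisely the $\rho(\kappa^2+1)^2+\kappa^2$ denominator and $\rho(1-\kappa^4)+\kappa^2\sqrt{4\rho+1}$ numerator. The rest is careful bookkeeping of the Cai--Zhang style chaining together with the non-convex triangle inequality for $\|\cdot\|_q^q$.
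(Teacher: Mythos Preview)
Your reduction to D-NSP$_q$ via Theorem~\ref{exact_nsp} matches the paper, but the core of your argument diverges in a way that will not recover the stated threshold~(\ref{noiseless_cond}).

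The paper does \emph{not} partition $T_0^c$ into blocks $T_1,T_2,\ldots$. Instead it argues by contradiction (assume $\|\D_T\boldsymbol{h}\|_q^q\geq\|\D_{T^c}\boldsymbol{h}\|_q^q$) and performs a single threshold split of $\D_{T^c}\boldsymbol{h}$ into a large part $\boldsymbol{h}_1$ (entries exceeding $\alpha/t$, at most $t^qk$ of them) and a small part $\boldsymbol{h}_2$. Lemma~\ref{sparsepolytope} is applied \emph{once}, to the entrywise $q$-th power $\boldsymbol{h}_2^q$, writing it as a convex combination $\sum_i\lambda_i\u_i$ of sparse vectors. With a free parameter $\mu$ it sets $\boldsymbol{\theta}_i^q=(\D_T\boldsymbol{h})^q+\boldsymbol{h}_1^q+\mu\u_i$ and invokes the exact Cai--Zhang identity
\[
\sum_i\lambda_i\Big\|\B\Big(\sum_j\lambda_j\boldsymbol{\theta}_j^q-c\boldsymbol{\theta}_i^q\Big)\Big\|_2^2+(1-2c)\sum_{i\leq j}\lambda_i\lambda_j\|\B(\boldsymbol{\theta}_i^q-\boldsymbol{\theta}_j^q)\|_2^2=\sum_i\lambda_i(1-c)^2\|\B\boldsymbol{\theta}_i^q\|_2^2
\]
with $c=1/2$. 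The decisive device you omit is the reweighting $\B=\X\D^+\Lam$ with $\Lam=\diag((\D\boldsymbol{h})_i^{1-q})$: since $\Lam(\D\boldsymbol{h})^q=\D\boldsymbol{h}$ entrywise, this is what lets the $\ell_2$-based RIP interact with $q$-th powers and makes $\B(\D\boldsymbol{h})^q=\X\boldsymbol{h}=\0$. Bounding both sides via RIP and $\sigma_{\max}(\D^+),\sigma_{\min}(\D^+)$, and then optimizing the resulting quadratic in $\mu$, is exactly what generates the numerator $\rho(1-\kappa^4)+\kappa^2\sqrt{4\rho+1}$ and denominator $\rho(\kappa^2+1)^2+\kappa^2$.

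Your sketch conflates two distinct techniques: Cand\`es--Tao block chaining (which does not use the polytope lemma and generically yields strictly weaker RIP constants) and a per-block application of Lemma~\ref{sparsepolytope} that is not how the lemma functions in the Cai--Zhang framework. Without the $\Lam$-reweighting and the specific identity above, the vague ``parallelogram-type inequality'' you posit will not close to the exact constant in~(\ref{noiseless_cond}); the assertion that ``the stated threshold is exactly the regime where $\Psi<1$'' is unsupported by the steps you describe.
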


This theorem says that although the $\ell_q$-analysis minimization is a nonconvex optimization problem with many local minimums, one still can find the global optimum under the condition (\ref{noiseless_cond}). As pointed out by
\citet{BlanchardThompson2009}, the higher-order RIP condition, just as (\ref{noiseless_cond}), is easier to be satisfied by a larger subset of matrix ensemble such as Gaussian random matrices. Thus, our result is meaningful both theoretically and practically. 

It is easy to verify that the right-hand side of the condition (\ref{noiseless_cond}) is monotonically decreasing with respect to $q\in(0,1]$ when $t\geq 1$. Therefore, in terms of the $\mathcal{A}$-RIP constant $\delta_{(t^q+1)k}$ with order more than $2k$, the condition (\ref{noiseless_cond}) is relaxed if we use the $\ell_q$-analysis minimization ($q<1$) instead of the $\ell_1$-analysis minimization. A resulted benefit is that the nonconvex $\ell_q$-analysis minimization allows more sampling matrices to be used than its convex counterpart in compressed sensing.  Given a $\rho$, a larger condition number $\kappa$ will make the condition (\ref{noiseless_cond}) more restrictive, because the value of the inequality's right-hand side becomes smaller. In other words, an analysis operator with a too large condition number could let the $\ell_q$-analysis  minimization fail to do recovery. This provides hints on the evaluation of the analysis operator. For example, it is reasonable to choose a tight frame as the analysis operator in some signal processing applications. When $q$ tends to zero, the following result is straightforward.

\begin{corollary}
Let $\bet\in\mathbb{R}^d$, $\y=\X\bet$, and $||\D\bet||_0=k$. Assume that $\delta_{2k} < \frac{\rho(1-\kappa^4) + \kappa^2 \sqrt{4\rho+1}}  { \rho (\kappa^2+1)^2 + \kappa^2 }$ with $\rho=t^{-2}/4$. Then there is some small enough $q>0$ such that the minimizer of the $\ell_q$-analysis minimization problem (\ref{analysis_lq}) with $\epsilon=0$ is exactly $\bet$.
\end{corollary}

\begin{table}[t]\setlength{\tabcolsep}{1.6pt} \label{cond_table}
\caption{Different Sufficient Conditions}
\label{tab:1}
\vskip 0.15in
\begin{center}
\begin{small}
\begin{sc}
\begin{tabular}{lccccr}
\hline
$q$ & $t$ &  $\kappa$ &  Recovery condition  \\
\hline
1  &  1   &  1    &  $\delta_{2k} < \frac{\sqrt{2}}{2}$    \\
$\frac{1}{2}$  &  1   &     1    &  $\delta_{2k} < \frac{\sqrt{2}}{2}$    \\
1  &  2   &     1    &  $\delta_{3k} < \sqrt{ \frac{2}{3} }$    \\
$\frac{1}{2}$  &  4   &     1    &  $\delta_{3k} < \sqrt{ \frac{8}{9} }$    \\
1  &  6   &      1    &  $\delta_{7k} < \sqrt{ \frac{6}{7} }$    \\
$\frac{1}{2}$  &  36   &     1    &  $\delta_{7k} < \sqrt{ \frac{216}{217} }$    \\
\hline
\end{tabular}
\end{sc}
\end{small}
\end{center}
\vskip -0.1in
\end{table}

\begin{remark}
 \rm In the case $\D=\I$ and $q=1$, the condition (\ref{noiseless_cond}) is the same as the one of Theorem 1.1 in \citep{CaiZhang2014} which is a sharp condition for the basis pursuit problem. Table \ref{cond_table} shows several sufficient conditions for exact recovery via the $\ell_q$-analysis optimization. Compared to  previous work, our results promote a significant improvement. For example, for the $\ell_1$-analysis minimization, our condition $\delta_{2k} <\frac{\sqrt{2}}{2}$  is weaker than the conditions $\delta_{2k} <0.08$ in \citep{Candesabc2010}, $\delta_{2k} <0.2$ in \citep{Liuabc2012}, $\delta_{2k}<0.47$ in \citep{Linabc2013}, $\delta_{2k} <0.49$ in \citep{LiLin2014}; and our $\delta_{7k} <\sqrt{\frac{6}{7}}$ is weaker than $\delta_{7k} <0.6$ in \citep{Candesabc2010} and \citep{Aldroubiabc2012}, $\delta_{7k} <0.687$ in \citep{Linabc2013}. While for the $\ell_q$-analysis minimization ($q<1$), the D-RIP conditions $\delta_{7k}<\frac{6-3(2/3)^{2/q-2}}{6-(2/3)^{2/q-2}}$ \citep{Aldroubiabc2012} and $\delta_{2k}<0.5$ \citep{LiLin2014} are both stronger than our condition (\ref{noiseless_cond}). Note that above results all consider the Parseval tight frame case ($\kappa=1$).
\end{remark}

\subsection{Stable Recovery in Noisy Case}

Now we consider the case that the observation is contaminated with stochastic noise ($\epsilon\neq 0$) and the analysis vector $\D\bet^{*}$ is approximately sparse. This is of great interest for many applications. Our goal is to provide estimate error bound between the population parameter $\bet^{*}$ and the minimizer $\hat{\bet}$ of the $\ell_q$-analysis minimization (\ref{analysis_lq}).
\begin{theorem} \label{stable_rip}
Let $\bet^{*}\in\mathbb{R}^d$, $\y=\X\bet^{*}+\w$, and $||\w||\leq \epsilon$. Assume that $\D\in\mathbb{R}^{n\times d}$ has full column rank, and its condition number is upper bounded by $\kappa< \sqrt{\frac{2\rho+1+\sqrt{4\rho+1}}{2\rho}}$. If $\X\in\mathbb{R}^{m\times d}$ satisfies the $\mathcal{A}$-RIP over the set $\mathcal{A}=\{{\D\v}: ||\v||_0 \leq (t^q+1)k \}$ with $k, t^q k\in\mathbb{N}, t>0, q\in(0,1] $, i.e.,
\begin{equation} \label{noisy_cond}
\delta_{(t^q+1)k} < \frac{\rho(1-\kappa^4) + \kappa^2 \sqrt{4\rho+1}} { \rho (\kappa^2+1)^2 + \kappa^2 }
\end{equation}
with $\rho= 4^{1/q-2}t^{q-2}$, then the minimizer $\hat{\bet}$ of the $\ell_q$-analysis minimization problem (\ref{analysis_lq}) obeys
\begin{align*}
  & || \boldsymbol{D}\hat{\bet} - \boldsymbol{D}\bet^{*} ||_q^q \leq  2 c_1^q k^{1-q/2}  \epsilon^q +  2(2c_2^q +1) \sigma_{k}(\D\bet^{*})_q^q,  \\
  & ||\hat{\bet} - \bet^{*} ||_2 \leq  \frac{2 c_1 }{\sigma_{min}(\boldsymbol{D}) } \epsilon + \frac{ 2^{1/q}  ( 2c_2 + 1) }{\sigma_{min}(\boldsymbol{D}) } \frac{ \sigma_{k}(\D\bet^{*})_q }{k^{1/q-1/2}},
\end{align*}
where
\begin{align*}
 & c_0 = (\frac{1}{2}-\mu)^2 (1+\delta_{(t^q+1)k}) \kappa^2 - \frac{1}{4}(1-\delta_{(t^q+1)k})     \\
 &~~~~~~~~ + \rho \mu^2(\kappa^2(1+\delta_{(t^q+1)k})-(1-\delta_{(t^q+1)k})),  \\
 & c_1 = \frac{ 2\kappa(\mu-\mu^2)\sqrt{1+\delta_{(t^q+1)k}} \sigma_{max}(\D) }{-c_0}, \\
 & c_2 = \frac{ 2\rho \mu^2(\kappa^2(1+\delta_{(t^q+1)k})-(1-\delta_{(t^q+1)k})) } {-c_0}  \\
 &~~~~~~~~ + \frac{ \sqrt{ - c_0 \rho \mu^2 (\kappa^2(1+\delta_{(t^q+1)k}) -(1-\delta_{(t^q+1)k}) ) } }{ -c_0},
\end{align*}
and $\mu>0$ is a constant depending on $\rho$ and $\kappa$.
\end{theorem}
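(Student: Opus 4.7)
The plan is to combine the three preparatory lemmas (\ref{cone_constraint}, \ref{tube_constraint}, \ref{tail_constraint}) with a clean decomposition of $\boldsymbol{Dh}$ into its restriction to $T$ and to $T^c$, where $T$ is the support of the $k$ largest entries of $\boldsymbol{D}\boldsymbol{\beta}^{*}$. Since Lemma \ref{tail_constraint} already delivers a bound of the form
\[
 \|\boldsymbol{D}_{T}\boldsymbol{h}\|_2 \le c_1\epsilon + c_2\,\frac{2^{1/q}\|\boldsymbol{D}_{T^c}\boldsymbol{\beta}^{*}\|_q}{k^{1/q-1/2}},
\]
the remaining work is essentially algebraic: upgrade this $\ell_2$ bound on the head to an $\ell_q^q$ bound on the whole of $\boldsymbol{Dh}$, and separately propagate it to $\boldsymbol{h}$ itself through $\sigma_{\min}(\boldsymbol{D})$. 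Note also that $\|\boldsymbol{D}_{T^c}\boldsymbol{\beta}^{*}\|_q = \sigma_k(\boldsymbol{D}\boldsymbol{\beta}^{*})_q$ by the choice of $T$.

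For the $\ell_q^q$-bound, I would first split $\|\boldsymbol{Dh}\|_q^q = \|\boldsymbol{D}_T\boldsymbol{h}\|_q^q + \|\boldsymbol{D}_{T^c}\boldsymbol{h}\|_q^q$ and apply the first inequality of Lemma \ref{cone_constraint} to get $\|\boldsymbol{Dh}\|_q^q \le 2\|\boldsymbol{D}_T\boldsymbol{h}\|_q^q + 2\sigma_k(\boldsymbol{D}\boldsymbol{\beta}^{*})_q^q$. Then, using $k$-sparsity of $\boldsymbol{D}_T\boldsymbol{h}$, the norm comparison $\|\boldsymbol{D}_T\boldsymbol{h}\|_q \le k^{1/q-1/2}\|\boldsymbol{D}_T\boldsymbol{h}\|_2$ converts the $\ell_2$ estimate from Lemma \ref{tail_constraint} into $\|\boldsymbol{D}_T\boldsymbol{h}\|_q^q \le k^{1-q/2}\|\boldsymbol{D}_T\boldsymbol{h}\|_2^q$; plugging in and using the elementary subadditivity $(a+b)^q \le a^q + b^q$ for $0<q\le 1$ yields $\|\boldsymbol{D}_T\boldsymbol{h}\|_q^q \le c_1^q k^{1-q/2}\epsilon^q + 2c_2^q\sigma_k(\boldsymbol{D}\boldsymbol{\beta}^{*})_q^q$, and combining with the split above delivers the advertised bound.

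For the $\ell_2$-bound on $\boldsymbol{h}$, I would exploit that $\boldsymbol{D}$ has full column rank, so $\|\boldsymbol{h}\|_2 = \|\boldsymbol{D}^{+}\boldsymbol{Dh}\|_2 \le \|\boldsymbol{Dh}\|_2/\sigma_{\min}(\boldsymbol{D})$. It therefore suffices to control $\|\boldsymbol{Dh}\|_2 \le \|\boldsymbol{D}_T\boldsymbol{h}\|_2 + \|\boldsymbol{D}_{T^c}\boldsymbol{h}\|_2$. The second cone inequality of Lemma \ref{cone_constraint} combined with $\sqrt{a^2+b^2}\le a+b$ yields $\|\boldsymbol{D}_{T^c}\boldsymbol{h}\|_2 \le \|\boldsymbol{D}_T\boldsymbol{h}\|_2 + 2^{1/q}\sigma_k(\boldsymbol{D}\boldsymbol{\beta}^{*})_q/k^{1/q-1/2}$, so $\|\boldsymbol{Dh}\|_2 \le 2\|\boldsymbol{D}_T\boldsymbol{h}\|_2 + 2^{1/q}\sigma_k(\boldsymbol{D}\boldsymbol{\beta}^{*})_q/k^{1/q-1/2}$. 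Substituting the bound on $\|\boldsymbol{D}_T\boldsymbol{h}\|_2$ from Lemma \ref{tail_constraint} and dividing by $\sigma_{\min}(\boldsymbol{D})$ gives precisely the stated inequality.

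No real obstacle arises beyond careful bookkeeping of constants: the nontrivial content has already been absorbed into Lemma \ref{tail_constraint}, whose proof handles the inner product cross term generated by the noise via the Cauchy--Schwarz coupling $\langle \boldsymbol{XD}^{+}(\boldsymbol{D}_T\boldsymbol{h}+\boldsymbol{h}_1), \boldsymbol{Xh}\rangle$ and the D-RIP constant. The only subtlety to double-check is the passage from $\|\boldsymbol{D}_T\boldsymbol{h}+\boldsymbol{h}_1\|_2$ (which is what the lemma's proof actually controls) to $\|\boldsymbol{D}_T\boldsymbol{h}\|_2$, but this is immediate since $\boldsymbol{D}_T\boldsymbol{h}$ and $\boldsymbol{h}_1$ have disjoint supports, so $\|\boldsymbol{D}_T\boldsymbol{h}\|_2 \le \|\boldsymbol{D}_T\boldsymbol{h}+\boldsymbol{h}_1\|_2$. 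Once this is noted, the theorem follows by concatenating the estimates above.
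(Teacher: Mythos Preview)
Your proposal is correct and follows essentially the same route as the paper's own proof: both arguments reduce the theorem to Lemma~\ref{tail_constraint}, convert the $\ell_2$ head bound to an $\ell_q^q$ bound via $\|\boldsymbol{D}_T\boldsymbol{h}\|_q \le k^{1/q-1/2}\|\boldsymbol{D}_T\boldsymbol{h}\|_2$ and subadditivity of $x\mapsto x^q$, combine with the cone constraint of Lemma~\ref{cone_constraint}, and for the $\ell_2$ estimate pass through $\sigma_{\min}(\boldsymbol{D})$ after bounding $\|\boldsymbol{D}_{T^c}\boldsymbol{h}\|_2$ by the second cone inequality. Your remark about $\|\boldsymbol{D}_T\boldsymbol{h}\|_2\le\|\boldsymbol{D}_T\boldsymbol{h}+\boldsymbol{h}_1\|_2$ is exactly the (unstated) step the paper uses to pass from the quadratic inequality in the proof of Lemma~\ref{tail_constraint} to its announced conclusion.
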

The $\ell_2$ error bound shows that the $\ell_q$-analysis optimization can stably recover the approximately cosparse vector in presence of noise. Again, we can see that a too ill-conditioned analysis operator leads to bad performance. Additionally, a $\ell_q$ error bound of the difference between $\bet^{*}$ and $\hat{\bet}$ in the analysis domain is provided, which will be used to show the advantage of the $\ell_q$-analysis minimization in the next subsection.

The linear model (\ref{inverse}) with Gaussian noise is of particular interest in machine learning and signal processing. 
Lemma 1 in \citep{CaiTabc2009} shows that the noise vector $\w\sim N(0,\sigma^2 \I)$ is upper bounded  by $\sigma \sqrt{m+2\sqrt{m\log m}}$. The following result is thus evident.
\begin{corollary}
 If the matrix $\X\in\mathbb{R}^{m\times d}$ satisfies the $\mathcal{A}$-RIP condition (\ref{noisy_cond}) and the noise vector $\w\sim N(0,\sigma^2 \I)$, then the minimizer $\hat{\bet}$ of (\ref{analysis_lq}) satisfies
\begin{align*}
& ||\hat{\bet} - \bet^{*} ||_2 \leq  \frac{2 c_1 }{\sigma_{min}(\boldsymbol{D}) } \sigma \sqrt{m+2\sqrt{m\log m}}  \\
&~~~~~~~~~~~~~~~~~~~~~~~~~~~~ + \frac{ 2^{1/q}  ( 2c_2 + 1) }{\sigma_{min}(\boldsymbol{D}) } \frac{ \sigma_{k}(\D\bet)_q }{k^{1/q-1/2}}
\end{align*}
 with probability at least $1-\frac{1}{m}$.
\end{corollary}

\subsection{Benefits of Nonconvex $\ell_q$-analysis Minimization}

The advantage of the nonconvex $\ell_q$-analysis minimization over its convex counterpart is two-fold: the nonconvex approach can do recovery with a lower sample complexity and in a wider range of cosparsity. 

The following theorem is a natural extension of Theorem 2.7 of \citet{Fourcartabc2010} in which $\D=\I$.
\begin{theorem}
 Let $m,n,k\in\mathbb{N}$ with $m,k<n$. Suppose that a matrix $\X\in\mathbb{R}^{m\times d}$, a linear operator $\D\in\mathbb{R}^{n\times d}$ and a decoder $\bigtriangleup :\mathbb{R}^m\rightarrow \mathbb{R}^d$ solving $\y=\X\bet$ satisfy for all $\bet\in\mathbb{R}^d$,
 \[
  ||\D\bet- \bigtriangleup (\X\bet)||_q^q \leq C \sigma_{k}(\D\bet)_q^q
 \]
with some constant $C>0$ and some $0<q\leq 1$. Then the minimal number of samples $m$ obeys
\[
 m \geq C_1 q k \log (n/4k)
\]
with $k=||\D\bet||_0$ and $C_1=1/(2\log(2C+3))$.
\end{theorem}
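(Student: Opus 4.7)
The plan is an information-theoretic lower bound that combines the combinatorial packing of Lemma~\ref{commbi} with the instance-optimality hypothesis. The idea is to exhibit a large family of mutually well-separated test signals whose near-recovery by the decoder would force the measurement map $\X$ to be sufficiently high-dimensional to distinguish them.

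First I would apply Lemma~\ref{commbi} with $N = n$ and $s = 2k$ to obtain a family $\mathcal{U}$ of subsets of $[n]$, each of size $2k$, with pairwise intersections of size strictly less than $k$, and with $|\mathcal{U}| \geq (n/8k)^k$. For each $I\in\mathcal{U}$ I would construct a signal $\bet_I\in\mathbb{R}^d$ so that $\D\bet_I$ is a $\pm 1$ pattern on $I$ (working inside the cosparse model compatible with the range of $\D$; any set $I$ for which no such realization exists can be discarded at harmless cost to the constants). The key geometric fact is that, for distinct $I\neq J$,
\[
||\D\bet_I - \D\bet_J||_q^q \geq |I\triangle J| > 2k,
\]
while $\sigma_k(\D\bet_I)_q^q \leq k$ since $\D\bet_I$ is $2k$-sparse with unit-magnitude entries. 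The hypothesis then yields $||\D\bet_I - \hat{\x}_I||_q^q \leq Ck$ for every $I$, writing $\hat{\x}_I := \bigtriangleup(\X\bet_I)$.

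The second ingredient is the pseudonorm triangle inequality for $||\cdot||_q^q$ ($0<q\leq 1$). Applied to the three-term decomposition $\D\bet_I - \D\bet_J = (\D\bet_I - \hat{\x}_I) + (\hat{\x}_I - \hat{\x}_J) + (\hat{\x}_J - \D\bet_J)$, it gives
\[
||\hat{\x}_I - \hat{\x}_J||_q^q \geq ||\D\bet_I - \D\bet_J||_q^q - 2Ck > 2k(1-C),
\]
so that after a suitable rescaling (or sign-pattern enrichment of the test family) the decoded vectors $\hat{\x}_I$ become $\delta$-separated in $\ell_q$ at a prescribed scale. Finally I would invoke a packing argument: each $\hat{\x}_I$ is a deterministic function of the $m$-dimensional measurement $\X\bet_I$, and a covering-number bound limits the number of mutually $\delta$-separated decoder outputs to at most $(2C+3)^m$. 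Combining this with the combinatorial lower bound $|\mathcal{U}| \geq (n/8k)^k$ and unwinding the $q$-th power when converting from $||\cdot||_q^q$ to $||\cdot||_q$ yields
\[
qk\log(n/4k) \leq 2m\log(2C+3),
\]
which rearranges to exactly $m \geq C_1\, qk\log(n/4k)$ with $C_1 = 1/(2\log(2C+3))$.

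The main obstacle will be the quantitative packing step that produces the $(2C+3)^m$ bound: turning a pairwise $\ell_q^q$-separation of the decoder outputs into a count of admissible $m$-dimensional measurements is where Gelfand-width-type behaviour of the $\ell_q$ quasi-ball enters, and it is precisely this step that generates both the constant $2C+3$ and the factor $q$ in the exponent. A secondary concern is verifying the construction of $\bet_I$ realizing each sign pattern on $I$ as $\D\bet_I$; if the column space of $\D$ is restrictive, one must either restrict $\mathcal{U}$ to a compatible subfamily, or replace the indicators by their orthogonal projections onto the range of $\D$, with the resulting distortion absorbed into the multiplicative constants.
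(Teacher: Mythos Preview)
Your outline captures the high-level scheme (combinatorial packing plus instance optimality), but the mechanism you propose for the counting step is not the one that actually delivers the bound, and as stated it has two real problems.

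First, your decoder-output separation $\|\hat{\x}_I-\hat{\x}_J\|_q^q>2k(1-C)$ is vacuous whenever $C\geq 1$, which the theorem certainly allows. You mention ``rescaling or sign-pattern enrichment'' as a fix, but no such tweak repairs this: the instance-optimality constant $C$ governs the ratio between the recovery error and the tail, and once $C\geq 1$ there is no pointwise separation of decoder outputs to be had. The paper avoids this entirely by never trying to separate the $\hat{\x}_I$; instead it shows that the \emph{measurement images of whole $\ell_q$-balls} around the test signals are disjoint, and the disjointness argument works for all $C>0$.

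Second, and more fundamentally, your packing is taking place in the wrong space. Separating the decoder outputs $\hat{\x}_I\in\mathbb{R}^n$ tells you nothing about $m$ beyond the fact that $\bigtriangleup$ factors through $\mathbb{R}^m$; the decoder is nonlinear and there is no covering-number bound of the form $(2C+3)^m$ on an arbitrary function's range. The paper's argument lives entirely in the $r$-dimensional range of the \emph{linear} map $\X\D^+$ (with $r\leq m$): one takes $s=k$ (not $2k$), normalizes so each test vector has unit $\ell_q$-norm, sets $\rho=(2(C+1))^{-1/q}$, and shows via the hypothesis that the sets $\X\D^+((\D\bet)_I+\rho B_q^n)$ are pairwise disjoint and all contained in $(1+\rho^q)^{1/q}\,\X\D^+(B_q^n)$. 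A straight volume comparison then gives
\[
|\mathcal{U}|\,\rho^r \leq (1+\rho^q)^{r/q},
\]
i.e.\ $(n/4k)^{k/2}\leq (2C+3)^{m/q}$. Both the constant $2C+3=\rho^{-q}+1$ and the factor $q$ (from the exponent $r/q$ in the $\ell_q$-volume scaling) fall out of this identity; they do not come from ``unwinding a $q$-th power'' at the end. Your choice $s=2k$ also explains why your combinatorial factor is $(n/8k)^k$ rather than the $(n/4k)^{k/2}$ that matches the stated constant.

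In short: replace the decoder-output separation by a disjointness-of-images argument for $\ell_q$-balls under the linear map $\X\D^+$, and then use volume rather than covering numbers.
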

Define the decoder $\bigtriangleup(\X\bet) := \D \bigtriangleup_0(\y)$ with $\bigtriangleup_0(\y) := \argmin_{\bet,\y=\X\bet} ||\D\bet||_q^q$. Combining with the $\ell_q$ error bound in Theorem \ref{stable_rip}, we attain the following result. 
\begin{corollary} \label{cor:5}
 To recover the population parameter $\bet^{*}$, the minimal number of samples $m$ for the $\ell_q$-analysis minimization must obey
 \[
 m \geq C_2 q k \log (n/4k),
\]
where $k=||\D\bet^{*}||_0$ and $C_2=1/(2\log(8c_2^q+7))$ ($c_2$ is the constant in Theorem \ref{stable_rip}).
\end{corollary}
\begin{remark}
\rm In our analysis of the estimate error above, we used the $\mathcal{A}$-RIP over the set $\mathcal{A}=\{{\D\v}: ||\v||_0 \leq (t^q+1)k \}$, i.e., the D-RIP. As pointed out by \citet{Candesabc2010}, random matrices with Gaussian, subgaussian, or Bernoulli entries satisfy the D-RIP with sample complexity on the order of $k\log (n/k)$. It is consistent with Corollary~\ref{cor:5} in the case $q=1$. However, we see that the $\ell_q$-analysis minimization can have a lower sample complexity than the $\ell_1$-analysis minimization. Additionally, to guarantee the uniqueness of a $l$-cosparse solution of $\ell_0$-analysis minimization, the minimal number of samples required should satisfy the following condition:
\begin{equation} \label{sampling_lb}
   m\geq 2\cdot \max_{|\Lambda|\geq l} \textrm{dim}(\mathcal{W}_{\Lambda}),
\end{equation}
where $\mathcal{W}_{\Lambda}=\textrm{Null}(\D_{\Lambda})$. Please refer to \citet{Namabc2011} for more details. Therefore, the sample complexity of $\ell_q$-analysis minimization is lower bounded by $2\cdot \max_{|\Lambda|\geq l} \textrm{dim}(\mathcal{W}_{\Lambda})$.
\end{remark}

The condition (\ref{noiseless_cond}) guarantees that cosparse vectors can be exactly recovered via the $\ell_q$-analysis minimization. Define $S_q$ ($0<q\leq1$) as the largest value of the sparsity $S\in\mathbb{N}$ of the analysis vector $\D\bet$ such that the condition (\ref{noiseless_cond}) holds for some $t^q\in\frac{1}{S}\mathbb{N}$. The following theorem indicates the relationship between $S_q$ with $q<1$ and $S_1$ with $q=1$.
\begin{theorem} \label{thm:5}
Suppose that there exist $S_1\in\mathbb{N}$ and $t\in\frac{1}{S_1}\mathbb{N}$ such that
 \[
  \delta_{(t+1)S_1} <\frac{\rho(1-\kappa^4) + \kappa^2 \sqrt{4\rho+1}} { \rho (\kappa^2+1)^2 + \kappa^2 }
\]
with $\rho= \frac{1}{4} t^{-1}$.
Then there exist $S_q\in\mathbb{N}$ and $l^q\in\frac{1}{S_q}\mathbb{N}$ obeying
\begin{equation}\label{SqS1}
 S_q = \Big\lfloor \frac{t+1}{t^{\frac{q}{2-q}}+1} S_1 \Big\rfloor
\end{equation}
such that $(t+1)S_1 = (l^q+1)S_q$ and
\[
 \delta_{(l^q+1)S_q} < \frac{\rho(1-\kappa^4) + \kappa^2 \sqrt{4\rho+1}} { \rho (\kappa^2+1)^2 + \kappa^2 }
\]
with $\rho= \frac{1}{4} l^{q-2}$.

\end{theorem}
It can be verified that Theorem~\ref{thm:5} also holds for the condition (\ref{noisy_cond}). The equation (\ref{SqS1}) states that the $\ell_q$-analysis minimization with $q<1$ can do recovery in a wider range of cosparsity than the $\ell_1$-analysis minimization. For example, if $\delta_{5S_1}<\frac{2\sqrt{5}}{5}$, then the $\ell_{\frac{2}{3}}$-analysis minimization can recover a vector $\bet$ with $||\D\bet||_0=S_{\frac{2}{3}}=\lfloor \frac{5}{3} S_1 \rfloor $.

\section{Iteratively Reweighted Method for $\ell_q$-analysis Minimization}

The iteratively reweighted method is a classical approach to deal with the $\ell_q$ norm related optimization problem; see \citep{GorodnitskyRao1997,ChartrandYin2008,Daubechiesabc2010,Lu2014}. Inspired by them, we develop an iteratively reweighted method to solve the $\ell_q$-analysis optimization.
We reformulate (\ref{analysis_lq}) as the following unconstrained optimization problem:
\begin{equation}\label{unlq}
 \min_{{\bet}} \Big\{  \frac{1}{2}||\y-{\X}{\bet}||_2^2  + \lambda ||\D \bet||_q^q  \Big\}.
\end{equation}
It is hard to solve (\ref{unlq}) directly due to the nonsmoothness and nonseparability of the $\ell_q$ norm term. 
We provide a way to deal with the $\ell_q$ norm under the variational framework.

Note that the function $||\D{\bet}||_q^q$ is concave with respect to $|\D{\bet}|^{\alpha}=(|\boldsymbol\D_{1.} \bet|^{\alpha}, \ldots,|\boldsymbol\D_{n.} \bet|^{\alpha})^T$ for $\alpha\geq 1$. Thus there exists a variational upper bound of $||\D{\bet}||_q^q$. Given a positive vector $\boldsymbol{\eta} = (\eta_1,\ldots, \eta_n)^T$, we have the following variational formulation,
\begin{align*}
& ||\D{\bet}||_q^q = \sum_{i=1}^n (|\boldsymbol\D_{i.} \bet|^{\alpha})^{\frac{q}{\alpha}}  \\
&~~~~~~= \min_{\boldsymbol{\eta} > \boldsymbol{0}} \Big\{ J_{\alpha} \triangleq \frac{q}{\alpha}  \sum_{i=1}^n \Big( \eta_i|\boldsymbol\D_{i.} \bet|^{\alpha} + \frac{\alpha-q}{q} \frac{1}{{\eta_i}^{\frac{q}{\alpha-q}}} \Big) \Big\}
\end{align*}
for $\alpha\geq 1$ and $0<q\leq 1$. The function $J_{\alpha}$ is jointly convex in $({\bet},\boldsymbol{\eta})$. Its minimum is achieved at $\eta_i=1/|\boldsymbol\D_{i.} \bet|^{\alpha-q}$, $i=1,\ldots,n$. However, when ${\bet}$ is orthogonal to some $\D_{i.}$, the weight vector $\boldsymbol{\eta}$ may include infinite components. To avoid an infinite weight, we add a smoothing term $q/\alpha \sum_{i=1}^n \eta_i\varepsilon^{\alpha}$ $(\varepsilon\geq 0$) to $J_{\alpha}$.

Using the above variational formulation, we obtain an approximation of the problem (\ref{unlq}) as
\begin{align}\label{coirls}
& \min_{{\bet}} \bigg\{ F({\bet},\varepsilon) \triangleq  \min_{\boldsymbol{\eta}>\boldsymbol{0}} \frac{1}{2}||\y-{\X}{\bet}||_2^2  \\
&~~~~~ +\frac{\lambda q}{\alpha} \sum_{i=1}^n  \Big[ \eta_i(|\boldsymbol\D_{i.}{\bet}|^{\alpha}+\varepsilon^{\alpha}) {+} \frac{\alpha{-}q}{q} \frac{1}{{\eta_i}^{\frac{q}{\alpha{-}q}}} \Big] \bigg\}.\nonumber
\end{align}
We then develop an alternating minimization algorithm, which consists of three steps. The first step calculates $\boldsymbol{\eta}$ with ${\bet}$ fixed via
 \[
\boldsymbol{\eta}^{(k)} = \argmin_{\boldsymbol{\eta}>\boldsymbol{0}} \Big\{ \sum_{i=1}^n  \Big[ \eta_i(|\boldsymbol\D_{i.}\bet^{(k{-}1)}|^{\alpha} {+} \varepsilon^{\alpha}) {+} \frac{\alpha {-} q}{q} \frac{1}{{\eta_i}^{\frac{q}{\alpha {-} q}}} \Big] \Big\},
\]
which has a closed form solution. The second step calculates ${\bet}$ with $\boldsymbol{\eta}$ fixed via
\begin{equation*}\label{rwl2}
 {{\bet}}^{(k)} = \argmin_{{\bet}\in \mathbb{R}^d} \Big\{\frac{1}{2}||\y-{\X}{\bet}||_2^2 + \frac{\lambda q}{\alpha} \sum_{i=1}^n \eta_i^{(k)} |\boldsymbol\D_{i.}{\bet}|^{\alpha} \Big\},
\end{equation*}
which is a weighted $\ell_{\alpha}$-minimization problem. Particularly, the case $\alpha=2$ corresponds to a least squares problem which can be solved efficiently. The third step updates the smoothing parameter $\varepsilon$ according to the following rule \footnote{Various strategies can be applied to update $\varepsilon$. For example, we can keep $\varepsilon$ as a small fixed value. It is preferred to choose a sequence of $\{\varepsilon^{(k)}\}$ tending to zero \citep{Daubechiesabc2010}.}
\[ \varepsilon^{(k)} = \min\{\varepsilon^{(k-1)}, \rho \cdot r(\D{\bet}^{(k)})_{l}\}~\textrm{ with } \rho\in(0,1),\]
where $r(\D{\bet})_l$ is the $l$-th smallest element of the set $\{|\boldsymbol\D_{j.}{\bet}| : j=1,\ldots,n\}$. ${\bet}$ is a $l$-cosparse vector if and only if $r(\D \bet)_l=0$. The algorithm stops when $\varepsilon=0$.

\begin{algorithm}[!htb]
   \caption{The CoIRLq Algorithm}
   \label{alg:2}
\begin{algorithmic}
   \STATE {\bfseries Input:} $l$, ${\X}$, $\y$, $\D=[\boldsymbol\D_{1.}; \ldots ; \boldsymbol\D_{n.}]$.
   \STATE {\bfseries Init:} Choose ${\bet}^{(0)}$ such that ${\X}{\bet}^{(0)}=\y$ and $\varepsilon^{(0)}=1$.

   \WHILE{ $\|{{\bet}}^{(k{+}1)}-{{\bet}}^{(k)}\|_{\infty} >\tau$ or $\varepsilon^{(k)}\neq0$ }
   \STATE Update
     \[
      \eta_i^{(k)} = \Big( |\boldsymbol\D_{i.} {\bet}^{(k-1)}|^{\alpha} + ({\varepsilon^{(k-1)}})^{\alpha} \Big)^{\frac{q}{\alpha}-1}, i=1,\ldots,n .
      \]

   \STATE Update
  \begin{equation*}
    {{\bet}}^{(k)} = \argmin_{{\bet}\in \mathbb{R}^d} \Big\{\frac{1}{2}||\y-{\X}{\bet}||_2^2 + \frac{\lambda q}{\alpha} \sum_{i=1}^n \eta_i^{(k)}|\boldsymbol\D_{i.}{\bet}|^{\alpha} \Big\}.
  \end{equation*}

   \STATE Update
   \[ \varepsilon^{(k)} = \min\{\varepsilon^{(k-1)}, \rho \cdot r(\D{\bet}^{(k)})_{l}\}\textrm{ with } \rho\in(0,1).\]

   \ENDWHILE
   \STATE {\bfseries Output:} ${{\bet}}$
\end{algorithmic}
\end{algorithm}

\subsection{Convergence Analysis}

Our analysis is based on the optimization problem (\ref{coirls}) with the objective function $F({\bet},\varepsilon)$.
 Noting that $\boldsymbol{\eta}^{(k+1)}$ is a function of ${\bet}^{(k)}$ and $\varepsilon^{(k)}$, we define the following objective function
\begin{align*}
& Q({\bet},\varepsilon|{\bet}^{(k)},\varepsilon^{(k)}) \triangleq \frac{1}{2}||\y-{\X}{\bet}||_2^2   \\
&~~~~ +\frac{\lambda q}{\alpha} \sum_{i=1}^n  \Big[ \eta_i^{(k+1)}(|\boldsymbol\D_{i.}{\bet}|^{\alpha}+\varepsilon^{\alpha}) + \frac{\alpha-q}{q} \frac{1}{{\eta_i^{(k+1)}}^{\frac{q}{\alpha-q}}} \Big].
\end{align*}

\begin{lemma} \label{lemma2}
Assume that the analysis operator $\D$ has full column rank. Let $\{({\bet}^{(k)}, \varepsilon^{(k)}): k=1, 2, \ldots \}$ be a sequence generated by the CoIRLq algorithm. Then,
\[ ||{\bet}^{(k)}||_2 \leq  \sigma_{min}^{-1}(\D) (F({\bet}^{(0)},\varepsilon^{(0)})/\lambda)^{1/q} \]
and
\[ F({\bet}^{(k+1)},\varepsilon^{(k+1)}) \leq F({\bet}^{(k)},\varepsilon^{(k)}), \]
with equality holding if and only if ${\bet}^{(k+1)}={\bet}^{(k)}$ and $\varepsilon^{(k+1)}=\varepsilon^{(k)}$.
\end{lemma}

The boundedness of $||{\bet}^{(k)}||_2$ implies that the sequence $\{{\bet}^{(k)}\}$ converges to some accumulation point.
We can immediately derive the convergence property of the CoIRLq algorithm  from Zangwill's \emph{global convergence theorem} or the literature \citep{SriperumbudurLanckriet2009}. Here we omit the detail. 
Finally, it is easy to verify that when $\varepsilon^{*}=0$, $\bet^{*}$ is a stationary point of (\ref{unlq}).

\subsection{Recovery Guarantee Analysis}

To uniquely recover the true parameter, the linear operator $\X : \mathcal{A}\rightarrow\mathbb{R}^m$ must be a one-to-one map. Define a set $\bar{\mathcal{A}}=\{{\bet}={\bet}_1+{\bet}_2 : {\bet}_1,{\bet}_2\in\mathcal{A}\}$. \citet{BlumensathDavies2008} showed that a necessary condition for the existence of a one-to-one map requires that $\delta_{\bar{\mathcal{A}}}<1$ $(\delta_{\mathcal{A}}\leq\delta_{\bar{\mathcal{A}}})$. For any two $l$-cosparse vectors ${\bet}_1, {\bet}_2\in\mathcal{A}=\{{\bet}: \D_{\Lambda}{\bet}=\boldsymbol{0},|\Lambda|\geq l\}$, denote $T_1=supp(\D{\bet}_1)$, $T_2=supp(\D{\bet}_2)$, $\Lambda_1=cosupp(\D{\bet}_1)$ and $\Lambda_2=cosupp(\D{\bet}_2)$. Since $supp(\D({\bet}_1+{\bet}_2))\subseteq T_1\cup T_2$, we have $cosupp(\D({\bet}_1+{\bet}_2))\supseteq (T_1\cup T_2)^c = T_1^c\cap T_2^c = \Lambda_1\cap \Lambda_2$. Moreover, we also have $|\Lambda_1\cap \Lambda_2|=n-|T_1\cup T_2|\geq n-(n-l)-(n-l)=2l-n$.
Thus it requires that the linear operator $\X$ satisfies the $\mathcal{A}$-RIP with $\delta_{2l-n}<1$ to uniquely recover any $l$-cosparse vector from the set $\mathcal{A}=\{{\bet}:\D_{\Lambda}{\bet}=0,|\Lambda|\geq l\}$. Otherwise, there would exist two $l$-cosparse vectors ${\bet}_1\neq{\bet}_2$ such that $\X({\bet}_1-{\bet}_2)=0$. \citet{Giryesabc2013}  showed that there  exists a random matrix $\X$ satisfying such a requirement with high probability.
\begin{theorem} \label{thm:2}
Let ${\bet}^{*}\in\mathbb{R}^d$ be a $l$-cosparse vector, and $\y={\X}{\bet}^{*}+\w$ with $||\w||_2\leq\epsilon$. Assume that ${\X}$ satisfies the $\cal A$-RIP over the set $\mathcal{A}=\{{\bet}:\D_{\Lambda}{\bet}=\boldsymbol{0},|\Lambda|\geq l\}$ of order $2l-n$ with $\delta_{2l-n}<1$. Then the solution $\hat{{\bet}}$ obtained by the CoIRLq algorithm obeys
   \[ ||\hat{\bet} - \bet^{*}||_2 \leq C_1\sqrt{\lambda} + C_2 \epsilon, \]
where $C_1$ and $C_2$ are constants depending on $\delta_{2l-n}$.
\end{theorem}
We can see that the CoIRLq algorithm can recover an approximate solution away from the true parameter vector by a factor of $\sqrt{\lambda}$ in the noiseless case. 

\section{Numerical Analysis}

In this section we conduct  numerical analysis of the $\ell_q$-analysis minimization method on both simulated data and real data, and compare the performance of the case $q<1$ and the case $q=1$. We set $\alpha=2$ in the CoIRLq algorithm. 

\subsection{Cosparse Vector Recovery}

We generate the simulated datasets according to 
\[ \y={\X}{\bet}+\w, \]
where $\w \sim N(\boldsymbol{0},\sigma\boldsymbol{I})$. The sampling matrix ${\X}$ is drawn independently from the normal distribution with normalized columns. The analysis operator $\D$ is constructed such that $\D^T$is a random tight frame. To generate a $l$-cosparse vector ${\bet}$, we first choose $l$ rows randomly from $\D$ and form $\D_{\Lambda}$.Then we generate a vector which lies in the null space of $\D_{\Lambda}$. The recovery is deemed to be successful if the recovery relative error $||\hat{{\bet}}-{\bet}^{*}||_2/||{\bet}^{*}||_2 \leq 1e-4$.

In the first experiment, we test the vector recovery capability of the CoIRLq method with $q=0.7$. We set $m=80,n=144,d=120,l=99,$ and $\sigma=0$. Figure 1 illustrates that the CoIRLq method recovers the original vector perfectly.
\begin{figure}[htb]
\centering
{
\includegraphics[width=0.52\linewidth]{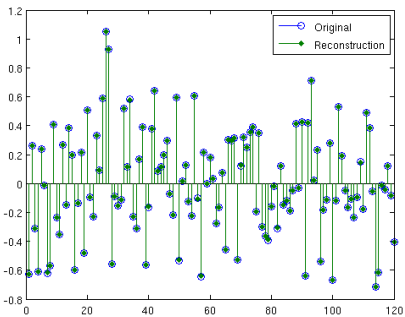}  \\
}
\caption{Cosparse vector recovery. }
\label{fig:1}
\end{figure}

In the second experiment, we test the CoIRLq method on a range of sample size and cosparsity with different $q$. Although the optimal tuning parameter $\lambda$ depends on $q$, a small enough $\lambda$ is able to ensure that $\y$ approximately equals to ${\X}{\bet}$ in the noiseless case. Thus, we set $\lambda=1e-4$ for all $q$ and $\sigma=0$. Figure 2 reports the result with 100 repetitions on every dataset. We can see that the CoIRLq method with $q=0.5, 0.7, 0.8$ can achieve exact recovery in a wider range of cosparsity and with fewer samples than with $q=1$. In addition, it should be noted that  small $q=0.1$ or $q=0.3$ do not perform better than  relatively large $q=0.7, 0.8$, because a too small $q$ leads to a hard-solving problem. Note that there is a drop of recovery probability where the cosparsity $l=118$ \footnote{When $l=120$, a zero vector is generated by our codes. So the recovery probability in cosparsity $l=120$ is zero.}. This is because  it is hard to algorithmically recover a vector residing in a subspace with a small dimension; please also refer to \citep{Namabc2011}.
\begin{figure}[htb]
\center
\scriptsize
\center
    \begin{tabular}{c@{}c}
        \includegraphics[width=0.45\linewidth]{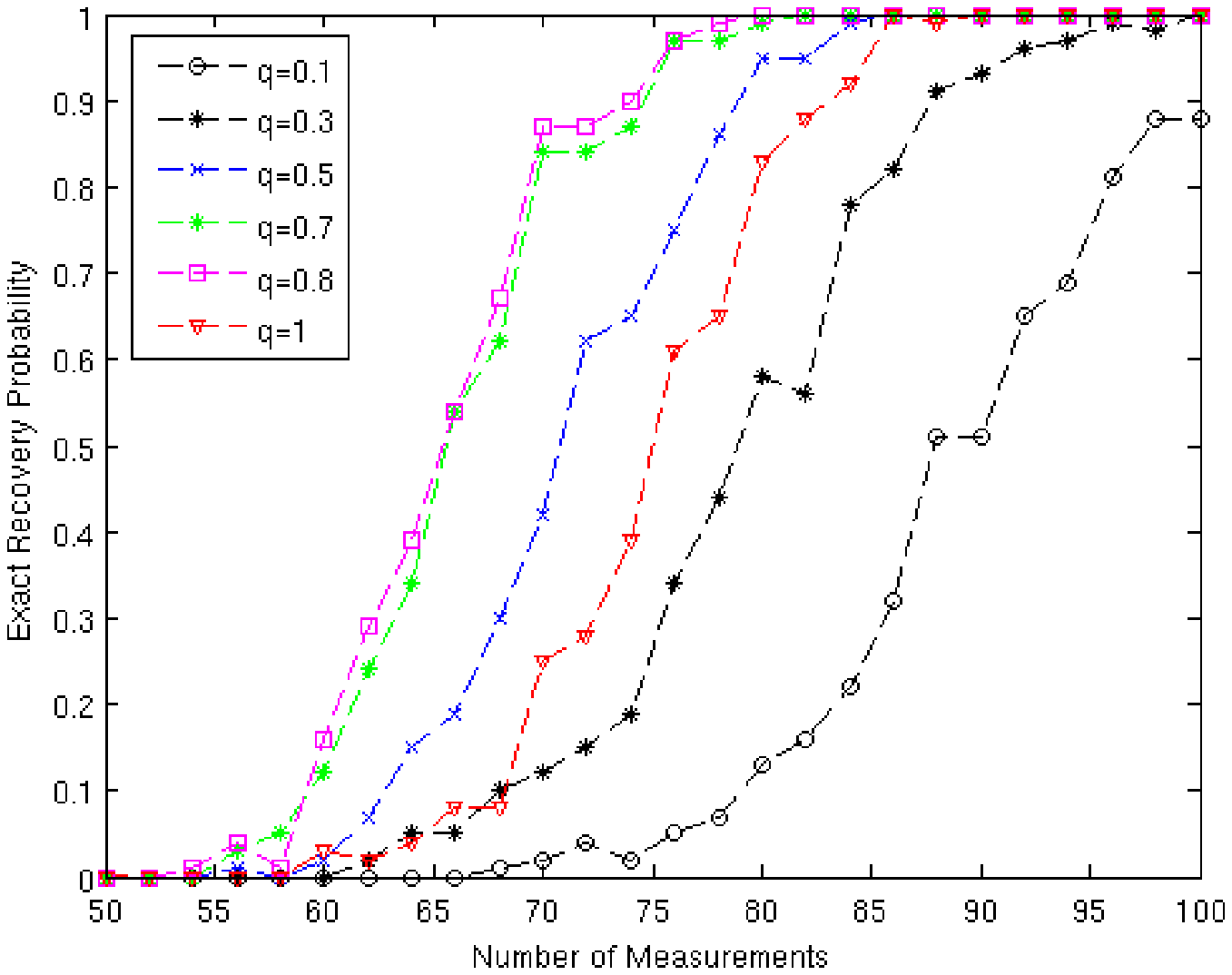}
        ~&~\includegraphics[width=0.45\linewidth]{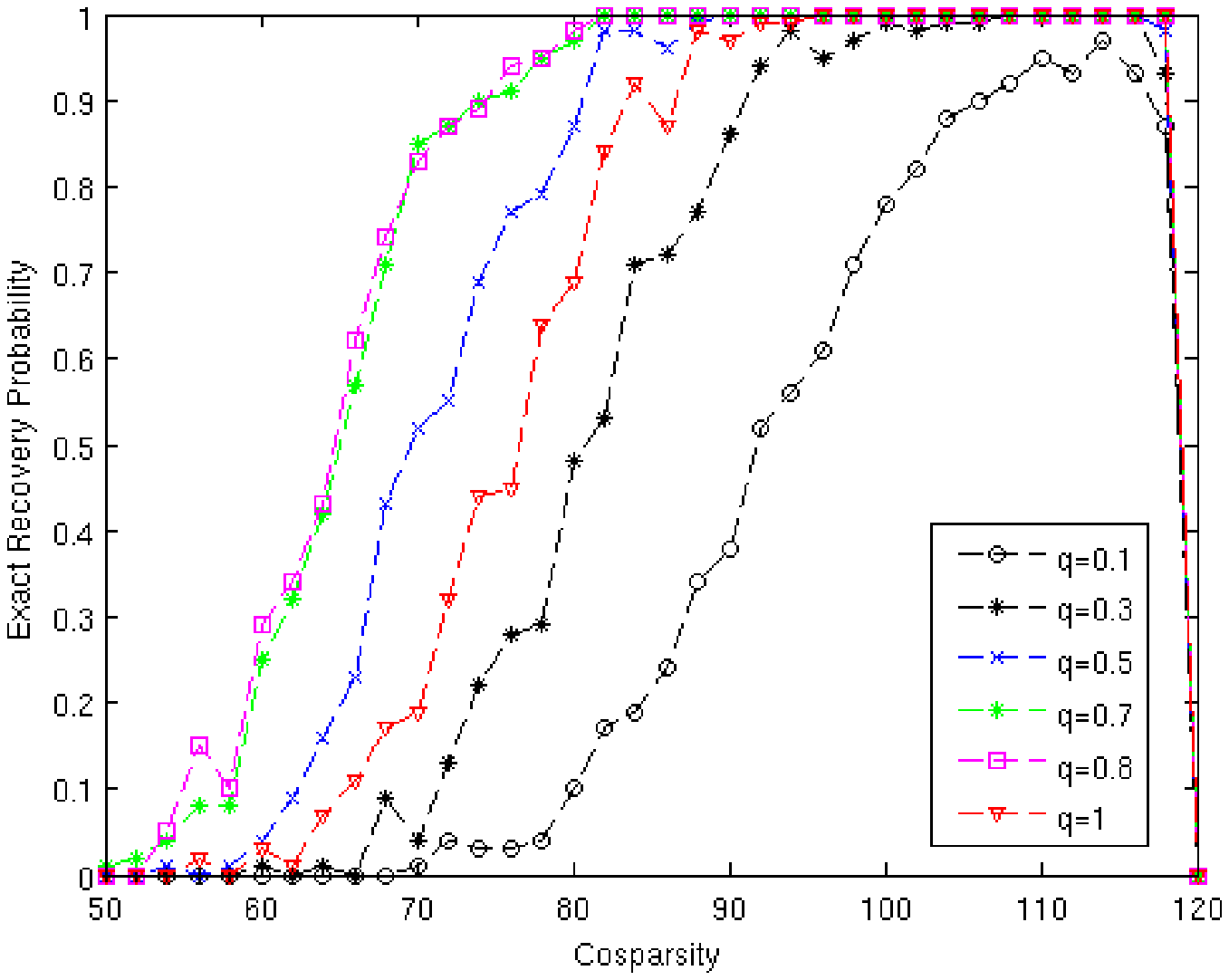}\\
        $n=144$, $d=120$, $l=99$ ~&~ $m=90$, $n=144$, $d=120$ 
    \end{tabular}
\caption{Exact recovery probability of the CoIRLq method.}
\label{fig:2}
\end{figure}
\begin{figure}[htb]
\center
\scriptsize
\center
    \begin{tabular}{c@{}c}
        \includegraphics[width=0.45\linewidth]{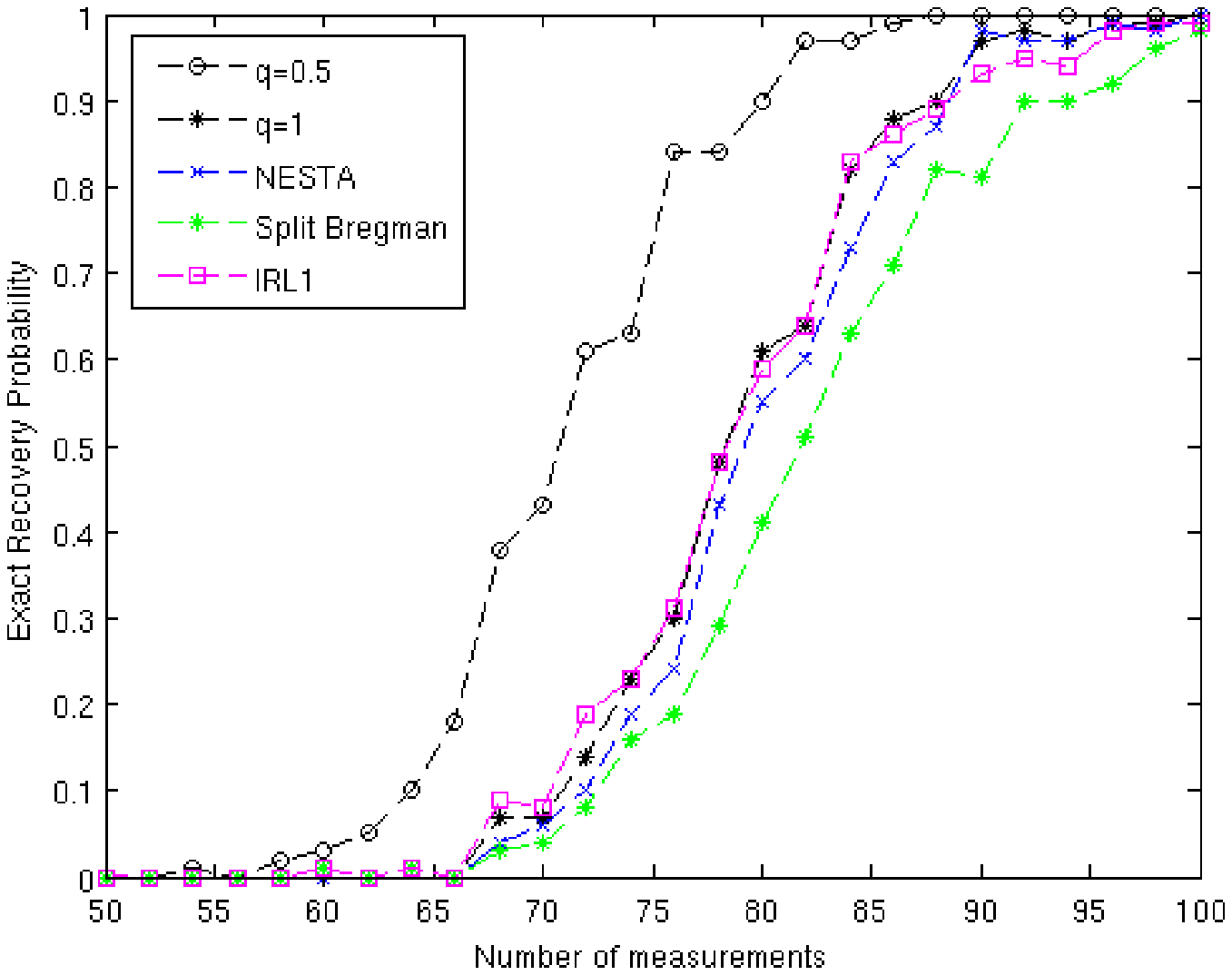}
        ~&~\includegraphics[width=0.45\linewidth]{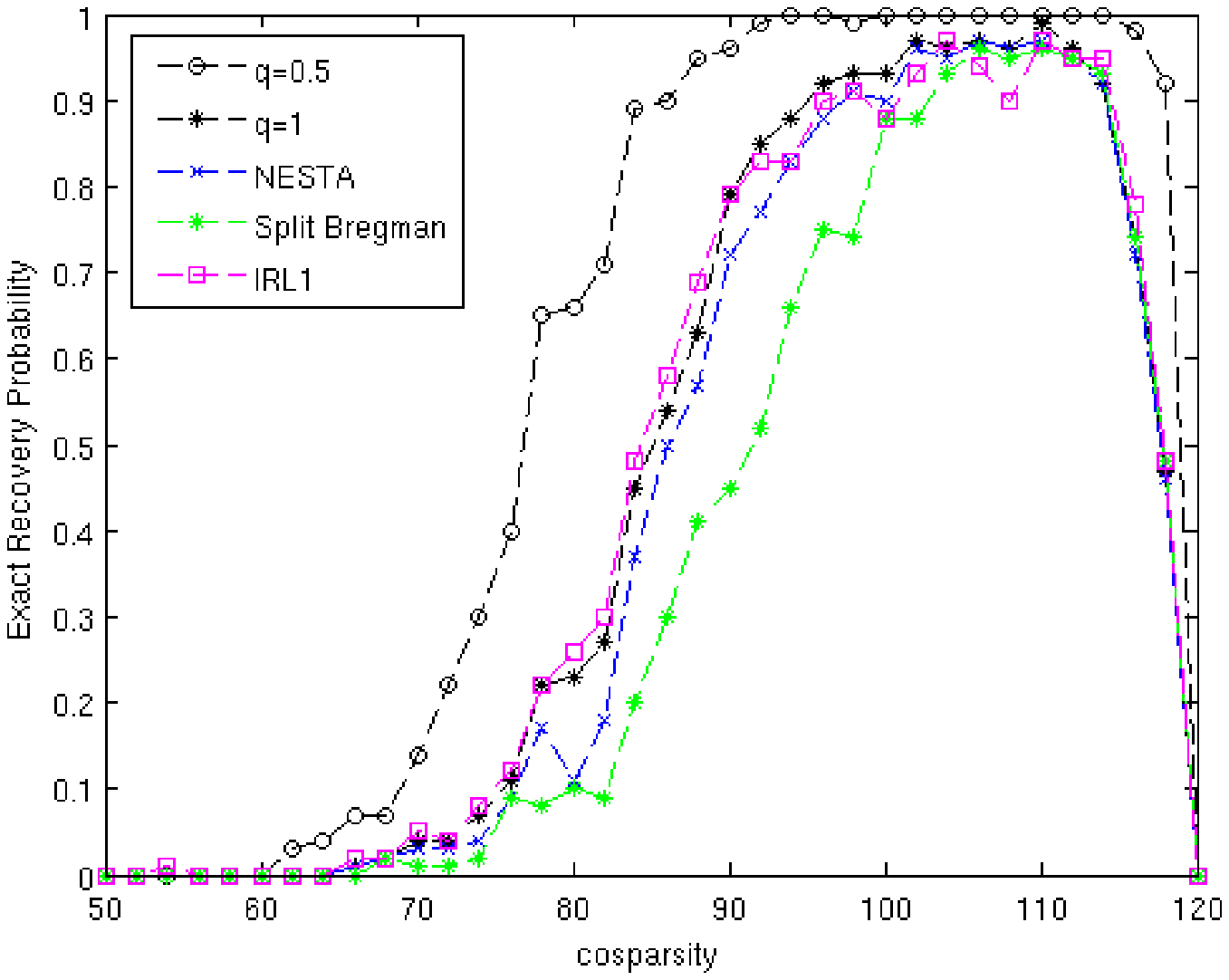}\\
        {$n=144$, $d=120$, $l=99$} ~&~ {$m=90$, $n=144$, $d=120$}
    \end{tabular}
\caption{Recovery probability of the CoIRLq, NESTA, IRL1 and split Bregman methods.}
\label{fig:3}
\end{figure}

In the third experiment, we compare the CoIRLq method with three state-of-the-art methods for the $\ell_1$-analysis minimization problem including NESTA ({http://statweb.stanford.edu/$\sim$candes/nesta/}), split Bregman method, and iteratively reweighted $\ell_1$ (IRL1) method. Set the noise level $\sigma=0.01$. The parameter $\lambda$ is tuned via the grid search method. We run these methods in a range of sample size and cosparsity. Figure 3 reports the result with 100 repetitions on every dataset. We can see that the nonconvex $\ell_q$-analysis minimization with $q<1$ is more capable of achieving exact recovery against noise than the convex $\ell_1$-analysis minimization. Moreover, the nonconvex approach can obtain exact recovery with fewer samples or in a wider range of cosparsity than the convex counterpart. Moreover, we found that the CoIRLq algorithm in the case $q<1$ often needs less iterations than in the case $q=1$. 

\subsection{Image Restoration Experiment}

\begin{figure}[!htb]
\begin{minipage}{0.2\linewidth}
  \centerline{\includegraphics[width=4cm]{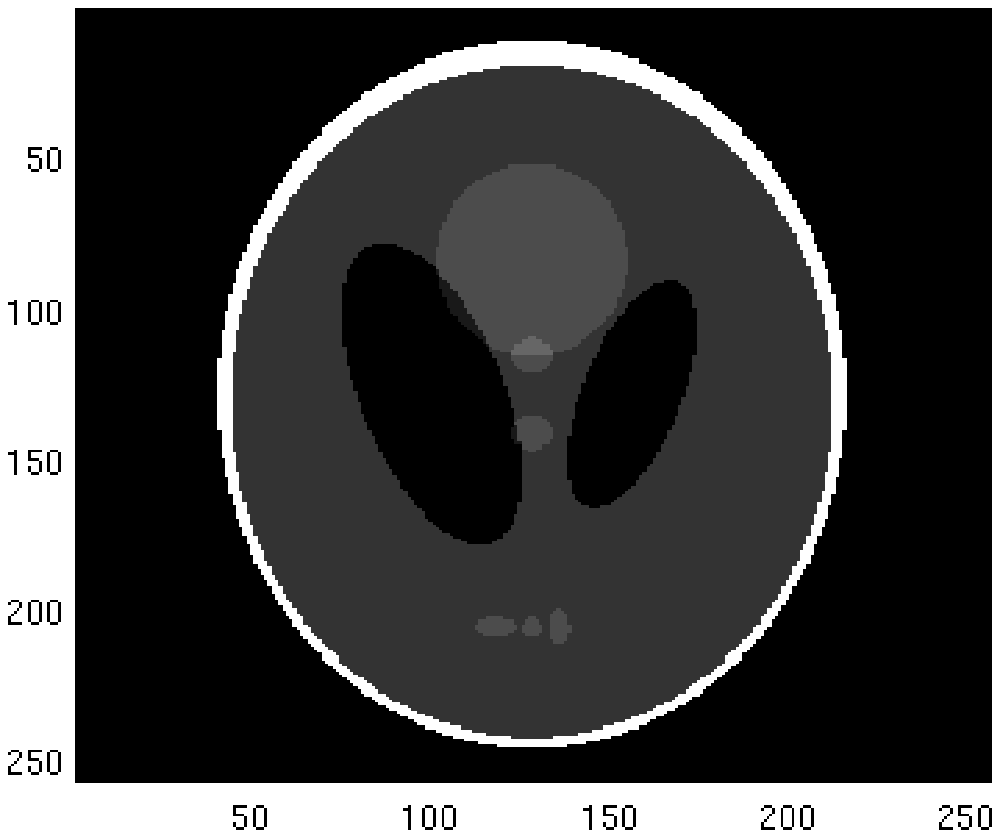}}
  \centerline{(a) Original image} 
\end{minipage}
\hfill
\begin{minipage}{0.2\linewidth}
  \centerline{\includegraphics[width=4cm]{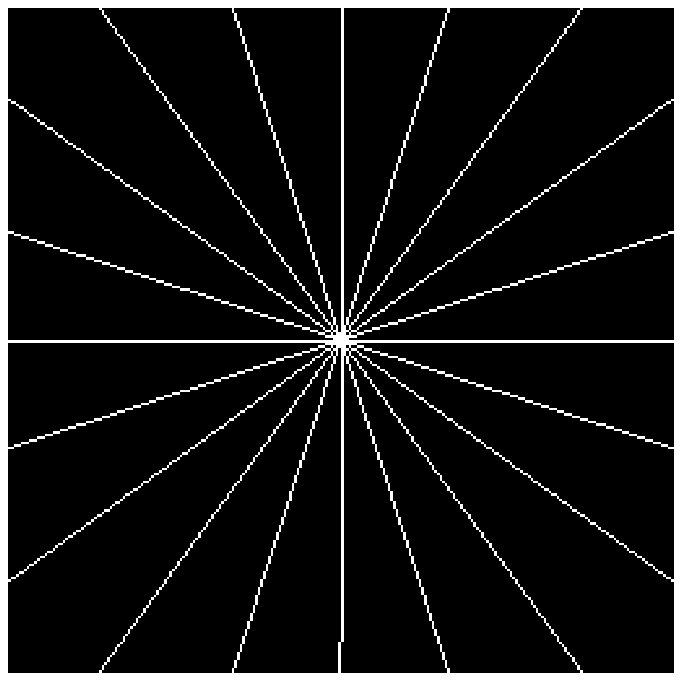}}
  \centerline{(b) 10 lines}
\end{minipage}
\hfill
\begin{minipage}{0.2\linewidth}
  \centerline{\includegraphics[width=4cm]{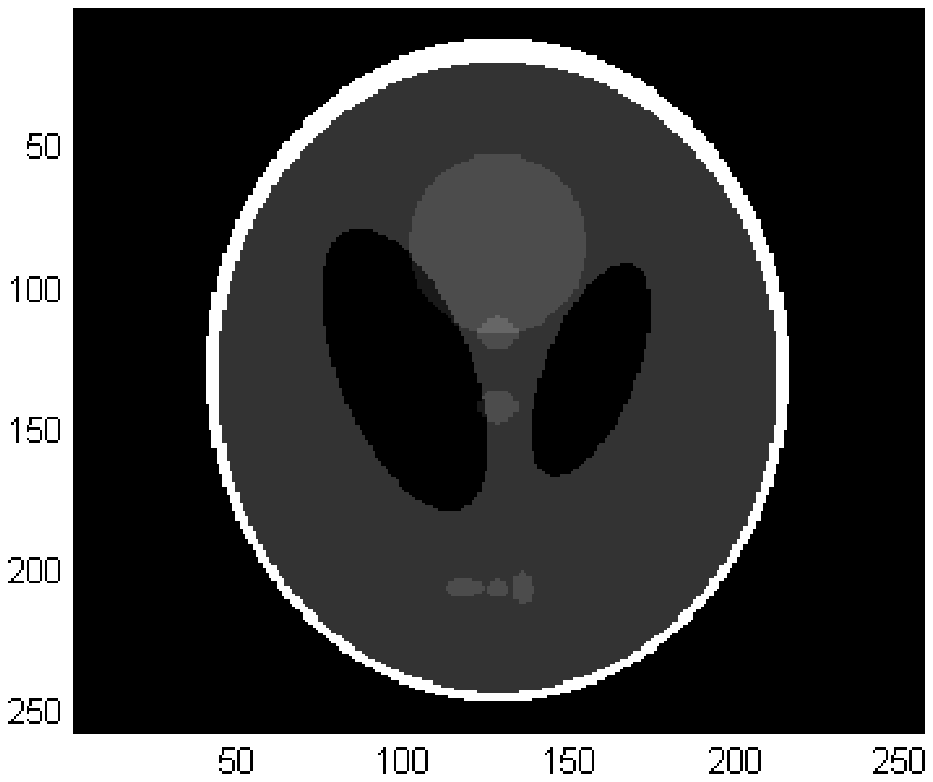}}
  \centerline{(c) SNR=107.7}
\end{minipage}
\hfill
\begin{minipage}{0.2\linewidth}
  \centerline{\includegraphics[width=4cm]{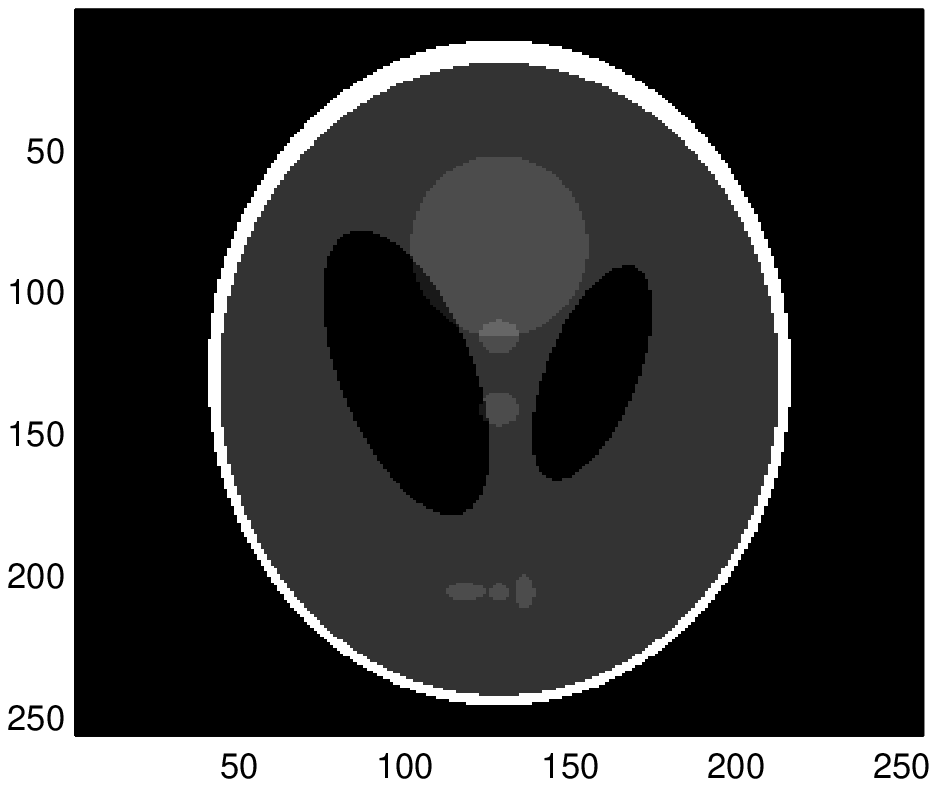}}
  \centerline{(d) SNR=83.6}
\end{minipage}
\vfill
\begin{minipage}{0.2\linewidth}
  \centerline{\includegraphics[width=4cm]{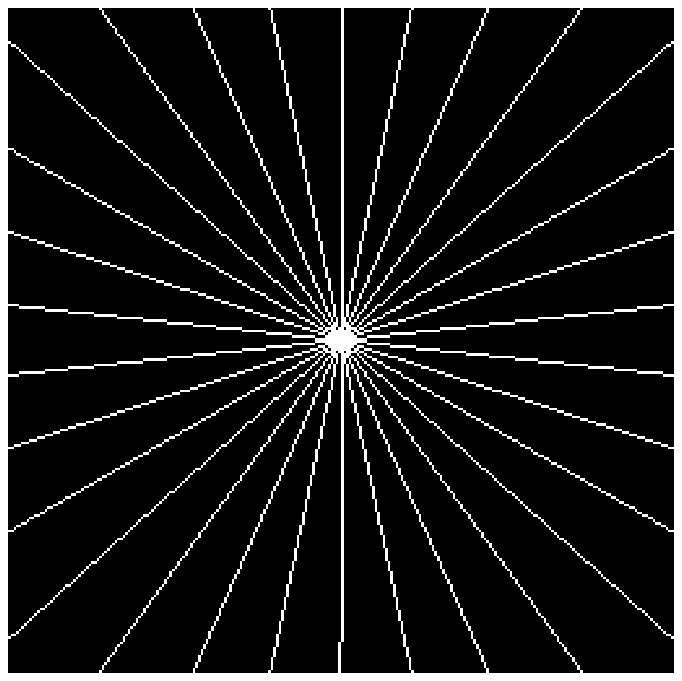}}
  \centerline{(e) 15 lines} 
\end{minipage}
\hfill
\begin{minipage}{0.2\linewidth}
  \centerline{\includegraphics[width=4cm]{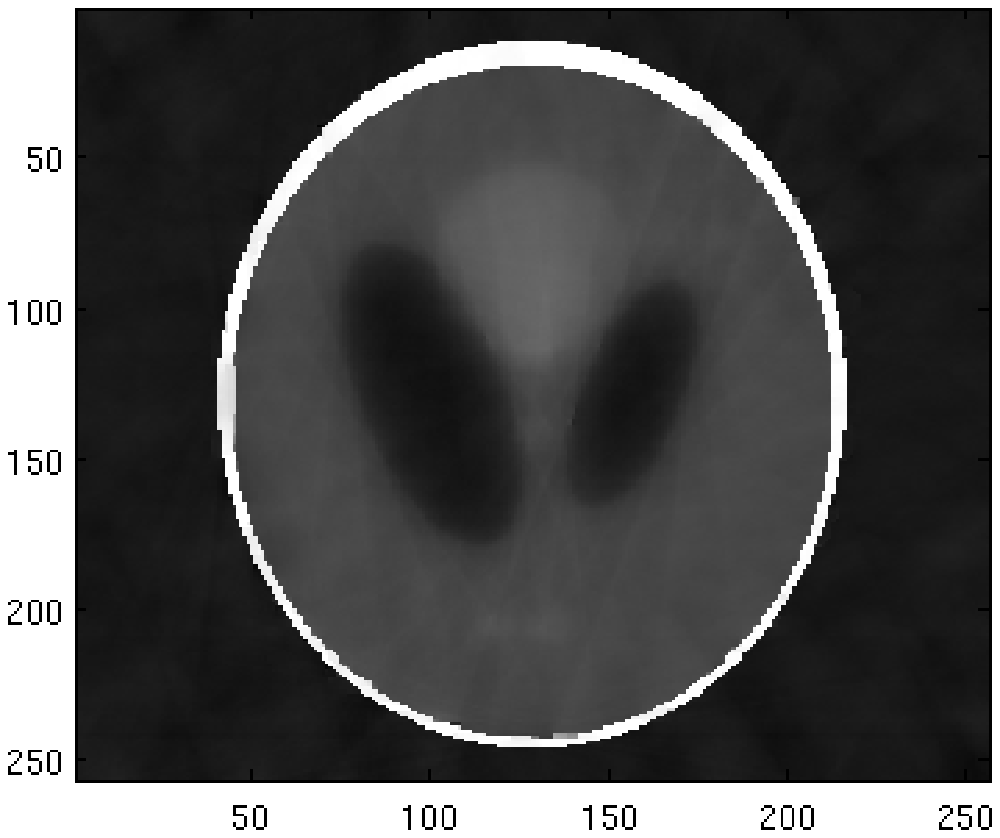}}
  \centerline{(f) SNR=30.5}
\end{minipage}
\hfill
\begin{minipage}{0.2\linewidth}
  \centerline{\includegraphics[width=4cm]{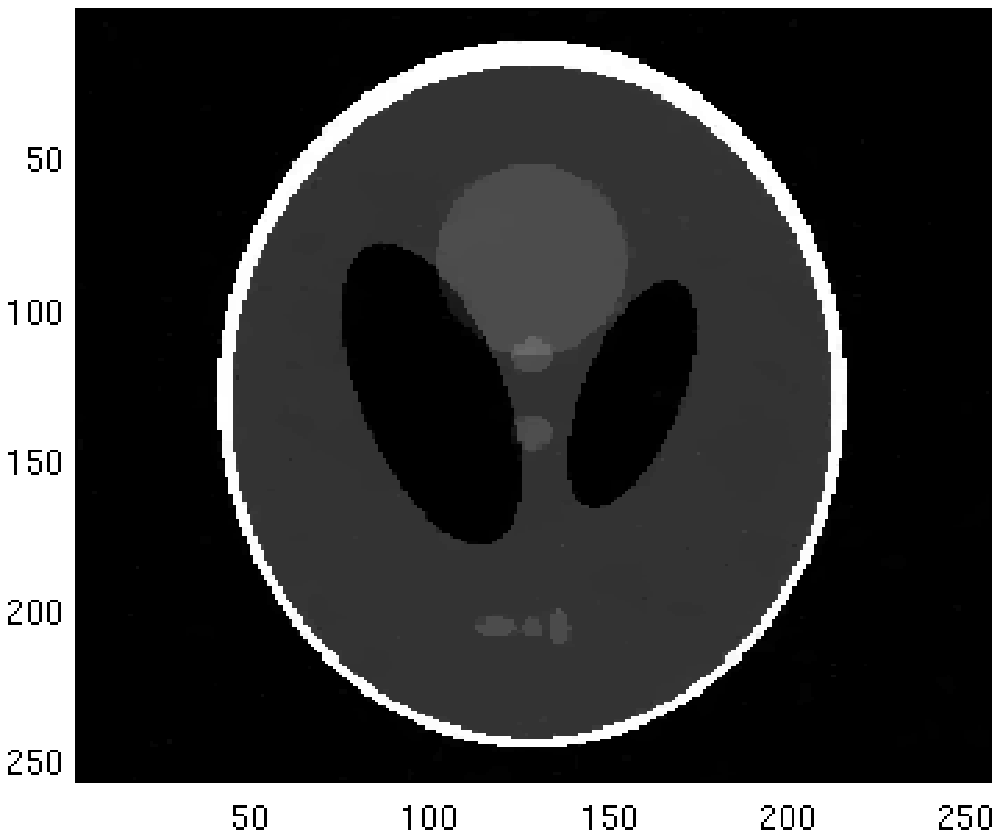}}
  \centerline{(g) SNR=45.7}
\end{minipage}
\hfill
\begin{minipage}{0.2\linewidth}
  \centerline{\includegraphics[width=4cm]{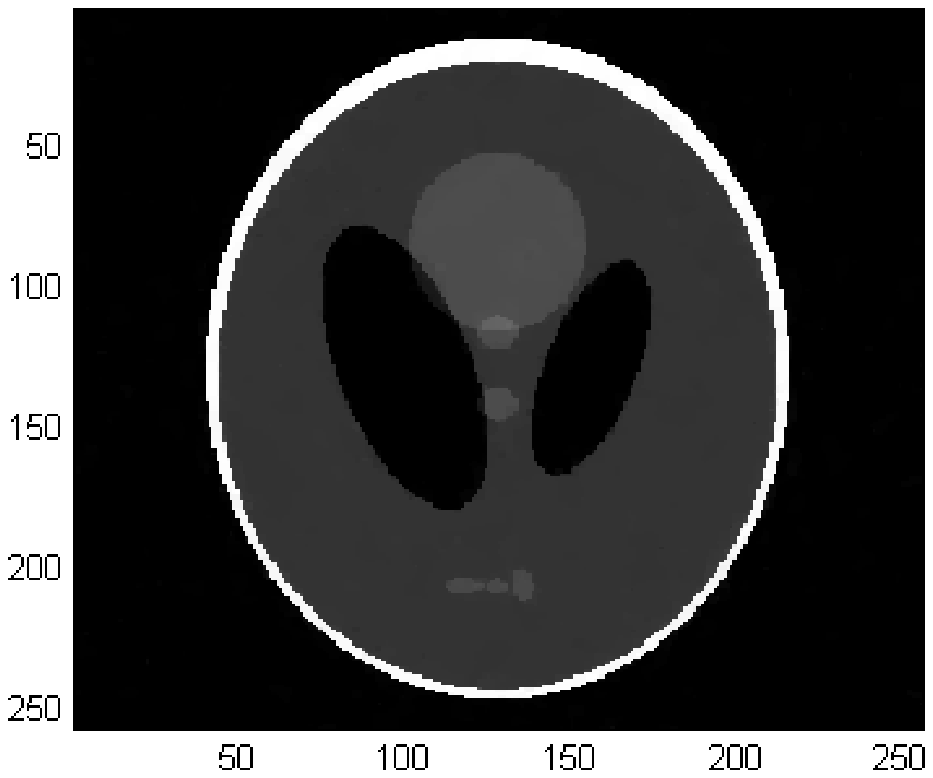}}
  \centerline{(h) SNR=43}
\end{minipage}
\caption{(a) Original Shepp Logan Phantom image; (b) Sampling locations along 10 radial lines; (c) Exact reconstruction via CoIRLq (q=0.7) with 10 lines without noise; (d) Exact reconstruction via CoIRLq (q=1) with 12 lines without noise; (e) Sampling locations along 15 radial lines; (f) Reconstruction via GAP with 15 lines and noise level $\sigma=0.01$; (g) Reconstruction via CoIRLq (q=0.7) with 15 lines and noise level $\sigma=0.01$; (h) Reconstruction via CoIRLq (q=1) with 15 lines and noise level $\sigma=0.01$.}
\label{fig:4}
\end{figure}

In this section we demonstrate the effectiveness of the $\ell_q$-analysis minimization on the Shepp Logan phantom reconstruction problem. In computed tomography, an image can not be observed directly. Instead, we can only obtain its 2D Fourier transform coefficients along a few radial lines due to certain limitations. This sampling process can be modeled as a measurement matrix $\X$. The goal is to reconstruct the image from the observation.

The experimental program is set as follows. The image dimension is of $256\times 256$, namely $d=65536$. The measurement matrix $\X$ is a two dimensional Fourier transform which measures the image's Fourier transform along a few radial lines. The analysis operator is a finite difference operator $\D_{\textrm{2D-DIF}}$ whose size is roughly twice the image size, namely $n=130560$. Since the number of nonzero analysis coefficients is $n-l=2546$, the cosparsity used is $l=n-2546=128014$. The number of measurements depends on the number of radial lines used. To show the reconstruction capability of the CoIRLq method, we conduct the following experiments (the parameter $\lambda$ is tuned via grid search). First, we compare our method with the greedy analysis pursuit (GAP~{http://www.small-project.eu/software-data}) method for the $\ell_0$-analysis minimization.

Figures~\ref{fig:4}-(f), (g) and (h) show that our method performs better than the GAP method in the noisy case. We can see that the CoIRLq method with $q<1$ is more robust to noise than the case with $q=1$. Second, we take an experiment using 10 radial lines without noise. The corresponding number of measurements is $m=2282$, which is approximately $3.48\%$ of the image size. Figure~\ref{fig:4}-(c) demonstrates that the CoIRLq ($q=0.7$) method with 10 lines obtains perfect reconstruction. Figure~\ref{fig:4}-(d) shows that the CoIRLq ($q=1$) method with 12 lines attains perfect reconstruction. However, the GAP method  needs at least 12 radial lines to achieve exact recovery; see \citep{Namabc2011}.

\section{Conclusion}

In this paper we have conducted the  theoretical analysis and developed the computational method, for the $\ell_q$-analysis minimization problem. Theoretically, we have established weaker conditions for exact recover in noiseless case and a tighter non-asymptotic upper bound of estimate error in noisy case. In particular, we have presented a necessary and sufficient condition guaranteeing exact recovery. 
Additionally, we have shown that the nonconvex $\ell_q$-analysis optimization can do recovery with a lower sample complexity and in a wider range of cosparsity. Computationally, we have devised an iteratively reweighted method to solve the $\ell_q$-analysis optimization problem. 
Empirical results have illustrated that our iteratively reweighted method outperforms the state-of-the-art methods.

\bibliographystyle{plainnat}
\bibliography{icml15}

\end{document}